\documentclass[11pt]{article}

\makeatletter
\def\maxwidth{ %
  \ifdim\Gin@nat@width>\linewidth
    \linewidth
  \else
    \Gin@nat@width
  \fi
}
\makeatother

\RequirePackage{amsmath,amssymb}
\RequirePackage{amsthm}
\RequirePackage{natbib}
\RequirePackage[colorlinks,allcolors=blue]{hyperref}

\usepackage{fullpage}
\usepackage{setspace}

\usepackage{bbm,graphicx,bm} %subfigure
\usepackage{enumitem}
\usepackage{multirow}
\usepackage[all,import]{xy}
\usepackage{appendix}
\usepackage{comment}
\usepackage{color}
\usepackage{soul}
\usepackage{pdflscape}
\usepackage{subcaption}

\usepackage{booktabs}
\usepackage{longtable}
\usepackage{array}
\usepackage[table]{xcolor}
\usepackage{wrapfig}
\usepackage{float}
\usepackage{colortbl}
\usepackage{pdflscape}
\usepackage{tabu}
\usepackage{threeparttable}
\usepackage{threeparttablex}
\usepackage[normalem]{ulem}
\usepackage{makecell}
\usepackage{afterpage}

\theoremstyle{definition}
\newtheorem{assumption}{Assumption}

\newtheorem{theorem}{Theorem}
\newtheorem{lemma}{Lemma}

\newtheorem{proposition}{Proposition}

%%% MACROS
\newcommand{\R}{\ensuremath{\mathbb{R}}}

\newcommand{\N}{\ensuremath{\mathbb{N}}}

\newcommand{\bbone}{\ensuremath{\mathbbm{1}}}

\newcommand{\E}{\ensuremath{\mathbb{E}}}
\newcommand{\calE}{\ensuremath{\mathcal{E}}}

\newcommand{\calR}{\ensuremath{\mathcal{R}}}

\newcommand{\calL}{\ensuremath{\mathcal{L}}}

\newcommand{\calP}{\ensuremath{\mathcal{P}}}

\newcommand{\Var}{\text{Var}}

\newcommand{\diag}{\text{diag}}

\newcommand{\mrp}{\text{mrp}}
\newcommand{\mr}{\text{mr}}
\newcommand{\drp}{\text{drp}}
\newcommand{\ps}{\text{ps}}
\def\super{\textsuperscript}

\def\b1{\boldsymbol{1}}

%\pdfminorversion=4

%%% FOR UNDERBRACE IN MATRICES

%% For editing

% \newcommand\edit{\textcolor{red}}
\newcommand\edit{}

%% For author affiliations
% \usepackage{authblk}

\setlength{\baselineskip}{1.5\baselineskip}

\definecolor{RED}{RGB}{255,0,0}

% \addtolength\topmargin{35pt}

\usepackage{soul}
\usepackage[utf8]{inputenc}

\title{Multilevel calibration weighting for survey data%
\thanks{We would like to thank Peng Ding, Yair Ghitza, Mark Handcock, Luke Keele, Shiro Kuriwaki, Drew Linzer, Luke Miratrix, Doug Rivers, Jonathan Robinson, Jas Sekhon, Betsy Sinclair, Brandon Stewart, and participants at BigSurv2020, AsianPolmeth VIII, and Polmeth 2021 for useful discussion and comments. This research was supported in part by the Hellman Family Fund at UC Berkeley and by the Institute of Education Sciences, U.S. Department of Education, through Grant R305D200010. The opinions expressed are those of the authors and do not represent views of the Institute or the U.S. Department of Education.}}

\author{Eli Ben-Michael, Avi Feller, and Erin Hartman}
\date{\today}

\begin{document}

\maketitle

\thispagestyle{empty}
\pagenumbering{gobble}

\begin{abstract}
\singlespacing
In the November 2016 U.S. presidential election, many state level public opinion polls, particularly in the Upper Midwest, incorrectly predicted the winning candidate. One leading explanation for this polling miss is that the precipitous decline in traditional polling response rates led to greater reliance on statistical methods to adjust for the corresponding bias---and that these methods failed to adjust for important interactions between key variables like education, race, and geographic region. 
Finding calibration weights that account for important interactions remains challenging with traditional survey methods: raking typically balances the margins alone, while post-stratification, which exactly balances all interactions, is only feasible for a small number of variables. 
In this paper, we propose multilevel calibration weighting, which 
enforces tight balance constraints for marginal balance and looser constraints for higher-order interactions.
This incorporates some of the benefits of post-stratification while retaining the guarantees of raking.
We then correct for the bias due to the relaxed constraints via a flexible outcome model; we call this approach Double Regression with Post-stratification (DRP).
We characterize the asymptotic properties of these estimators and show that the proposed calibration approach has a dual representation as a multilevel model for survey response. 
We then use these tools to to re-assess a large-scale survey of voter intention in the 2016 U.S. presidential election, finding meaningful gains from the proposed methods. The approach is available in the \texttt{multical} R package.
\end{abstract}

\clearpage
\pagenumbering{arabic}
\onehalfspacing
% \doublespacing

\section{Introduction}
Public opinion polling for the November 2016 U.S. presidential election was notable for the failure of state level polls, particularly in the Upper Midwest, to accurately predict the winning candidate. A leading explanation is that the precipitous decline in response rates for traditional polling approaches  led to increased reliance on possibly non-representative convenience samples. 
Such surveys can lead to large biases when analysts fail to adequately adjust for differences in response rates across groups, especially groups defined by fine-grained higher-order interactions
\citep{Kennedy2019, caughey_2020_elements}.
In particular, recent evaluations of 2016 election polling found that failing to adjust for the interaction between key variables such as education, race, and geographic region resulted in substantial bias \citep{Kennedy2018}.  

A pressing statistical question in modern public opinion research is therefore how to find survey weights that appropriately adjust for such higher-order interactions.  Traditional approaches, like raking, can perform poorly with even a moderate number of characteristics, typically balancing marginal distributions while failing to balance higher-order interactions. By contrast, post-stratification, which exactly balances all interactions, is only feasible for a small number of variables.  And while approaches like multilevel regression and post-stratification \citep[MRP;][]{Gelman1997} use outcome modeling to overcome this, they do not produce a single set of survey weights for all outcomes.
Fortunately, recent research on modern survey calibration \citep[e.g.,][]{Guggemos2010, chen2020dr_surveys} and on balancing weights for causal inference \citep[e.g.,][]{Zubizarreta2015, Hirshberg2019_augment} offer promising paths forward.

Building on these advances, we propose two principled approaches to account for higher-order interactions when estimating population quantities from non-probability samples.
First, we propose \emph{multilevel calibration} weighting, which exactly balances the first-order margins and \emph{approximately} balances interactions, prioritizing balance in lower-order interactions over higher-order interactions. 
Thus, this approach incorporates some of the benefits of post-stratification while retaining the guarantees of the common-in-practice raking approach.
And unlike outcome modeling approaches like MRP, multilevel calibration weights are estimated once and applied to all survey outcomes, an important practical constraint in many survey settings.

In some cases, however, multilevel calibration weighting alone may be insufficient to achieve good covariate balance on all higher order interactions, possibly leading to bias; or researchers might only be focused on a single outcome of interest. 
For this, we propose \emph{Double Regression with Post-stratification} (DRP), which combines multilevel calibration weights with outcome modeling. 
Similar to model-assisted survey calibration \citep{Breidt2017}, this approach uses outcome modeling, taking advantage of flexible modern prediction methods, to estimate and correct for possible bias from imperfect balance.
When the weights alone achieve good balance on higher-order interactions, the adjustment from the outcome model is minimal. When the higher-order imbalance is large, however, the bias correction will also be large and the combined estimator will rely more heavily on the outcome model. 

We characterize the numerical and statistical properties of both multilevel calibration weighting and the combined DRP estimator. 
By linking non-response bias to imbalance between the respondent sample and the overall population, we show how multilevel calibration and DRP control bias, and indicate a tradeoff between lower bias through better balance and lower variance through smaller adjustments.
We then describe this behavior asymptotically and show that the bias correction is critical for asymptotic Normality, yielding a non-parametric analog to recent double-robustness results in survey estimation \citep{chen2020dr_surveys}.
Through the Lagrangian dual, we also show that the multilevel calibration approach implicitly fits a multilevel model of non-response, shrinking the coefficients on higher order interaction terms and partially pooling across cells.

With these tools in hand, we return to the question of the failure of state-level polls in the 2016 US Presidential election. As we note above, \cite{Kennedy2018} show that many 2016 surveys failed to accurately account for the shift in public opinion among white voters with no college education, particularly in the Midwestern region of the country. 
We evaluate whether accounting for this higher-order interaction of race, education level, and geographic region can, retrospectively, improve public opinion estimates in the publicly available Pew poll. 
In particular, we combine a pre-election Pew poll of vote intention with the large, post-election Cooperative Congressional Election Study \citep[CCES;][]{DVN/GDF6Z0_2017}. We then assess how well the (re-wewighted) pre-election poll predicts the post-election ``ground truth.'' 
We also construct a simulation study calibrated to this example. 
We show that the multilevel weights substantially improve balance in interactions relative to raking and ad hoc post-stratification. 
We then show that further bias correction through DRP can meaningfully improve estimation.

Our proposed approach builds on two important advances in both modern survey methods and in causal inference. First, there has been a renewed push to find calibration weights that allow for \emph{approximate} balance on covariates, rather than exact balance \citep{Park2009, Guggemos2010, Zubizarreta2015}. Second, several recent approaches combine such weights with outcome modeling, extending classical generalized regression estimators in survey sampling and doubly robust estimation in causal inference \citep{chen2020dr_surveys, Athey2018_residual, Hirshberg2019_augment, Tan2020_calibrated}; we view our proposed DRP approach as a particular implementation of such augmented balancing weights. We give more detailed reviews in Sections \ref{sec:raking} and \ref{sec:multilevel_calibration}.

The paper proceeds as follows. Section \ref{sec:background} describes the notation and estimands, and formally describes various common survey weighting procedures such as raking and post-stratification. Section \ref{sec:multilevel_weights} characterizes the estimation error for arbitrary weighting estimators to motivate our multilevel calibration procedure, then describes the procedure. Section \ref{sec:drp} proposes the DRP estimator and analyzes its numerical and statistical properties. Section \ref{sec:dual} shows the dual relation between multilevel calibration and modelling non-response with a multilevel model. Section \ref{sec:sims} reports a simulation study calibrated to the case study, and Section \ref{sec:numerical} uses these procedures in the application. 
The methods we develop here are available in the \texttt{multical} R package.

\subsection{2016 U.S. Presidential Election Polling}
\label{sec:intro_application}
While national public opinion polls for the November 8, 2016 U.S. presidential election were, on average, some of the most accurate in recent public opinion polling, state-level polls were notable in their failure to accurately predict the winning candidate, particularly in the Upper Midwest. These state-level errors in turn led public opinion researchers to incorrectly predict the winner of the electoral college. 
\cite{Kennedy2018} attribute these errors to three main sources: (1) a late swing among undecided voters towards Trump, (2) failure to account for non-response related to education level, particularly among white voters, and (3) to a lesser degree, failure to properly predict the composition of the electorate.  

While all three of these concerns are important for survey practitioners, our analysis focuses on addressing concern (2) by allowing for deep interactions among important covariates, including race, education level, and region.  
To isolate this concern, we combine two high-quality surveys from before and after the 2016 election. We begin with the October 16, 2016 Pew survey of $2,062$ respondents, the final public release of Pew's election polling \citep{pew2016}.\footnote{Since our survey is from mid-October, we cannot account for concern (1) above, a late break towards Trump among undecided voters, which may contribute to remaining residual bias.} The primary outcome is respondents' ``intent to vote'' for each major party.
We combine this with the $44,909$ respondents in the 2016 Congressional Cooperative Election Study (CCES) post-election survey, a large, high-quality public opinion survey that accurately captures electoral outcomes at the state level \citep[see][]{DVN/GDF6Z0_2017}. Here, the primary outcome is respondents' ``retrospective'' vote for each major party. 

The combined Pew and CCES observations form a ``population'' of size $N=46,971$, where observations from the Pew survey are coded as respondents and observations from the CCES are coded as non-respondents.
Using this target, rather than the ground truth defined by the actual electoral outcomes, helps to address concern (3) above.
Specifically, the CCES validates voters against Secretaries of State voter files, allowing us to use known voters for whom we have measured auxiliary covariates to define our target population.

\begin{figure}[tb]
  \centering \includegraphics[width=.95\maxwidth]{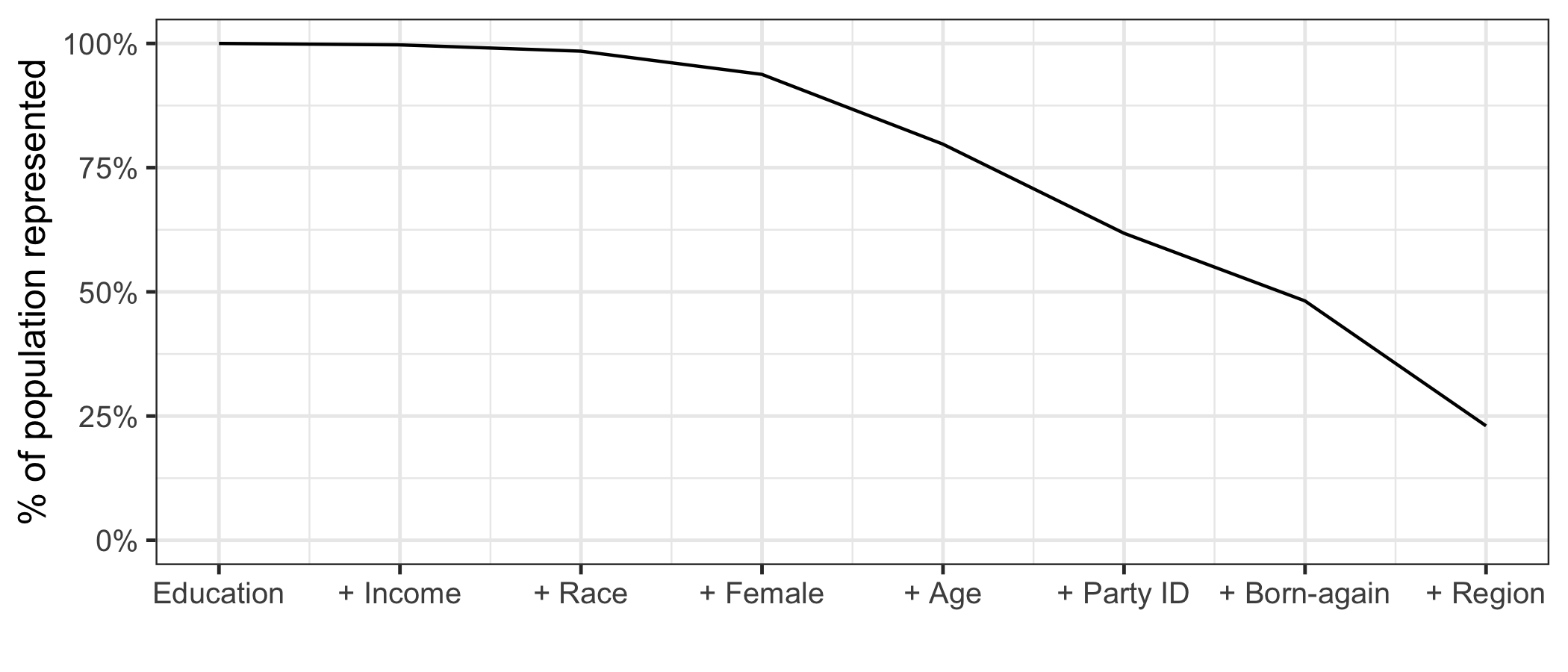}
\caption{Percent of the population that is not represented in the survey, beginning with education and successively interacting with income, religion, race, a binary for self-reported female, age, party identification, born-again Christian status, and region.}
\label{fig:n_empty_cell}
\end{figure}

Our goal is to adjust the respondent sample for possible non-response bias from higher-order interactions and assess whether the adjusted estimates are closer to the ground truth. 
As we discuss in Section \ref{sec:background} below, a key consideration is the set of voter characteristics that we adjust for via weighting and outcome modelling. Figure \ref{fig:n_empty_cell} shows the eight auxiliary variables we consider, measured in both the Pew and CCES surveys: income, religion, race, a binary variable for self-reported female, age, party identification, born-again Christian status, and region. All eight variables are coded as discrete, with the number of possible levels ranging from two to nine.\footnote{These are (i) education (6 levels), (ii) income (9 levels), (iii) race (4 levels), (iv) a binary for self-reported female (2 levels), (v) age (4 levels), (vi) party ID (3 levels), (vii) born again Christian (2 levels) and (viii) region (5 levels). Only 30\% of the potential combinations exist in the population.}  
Ideally, we would adjust for all possible interactions of these variables, via post-stratification. This is infeasible, however: there are $12,347$ possible combinations, far greater than the $n = 2,062$ respondents in our survey.
Figure \ref{fig:n_empty_cell} shows the percentage of the population that is represented in the survey as we progressively include---and fully interact---more covariates. With a single covariate, education (6 levels), each cell has at least one respondent. When including all eight covariates, the non-empty cells in the sample represent less than a quarter of the population.
These empty cells rule out using post stratification with these eight variables.
However, we do believe that there is important information in some higher order interactions, e.g. the three way interaction between race, education, and region.
This motivates our search for alternative adjustment methods that account for 
higher order interactions in a parsimonious way, prioritizing adjustment for strong interactions.

% \clearpage
\section{Background and setup}
\label{sec:background}
\subsection{Notation and estimands}
\label{sec:notation}
We will consider a finite population of $N$ individuals indexed $i=1,\ldots,N$. Each of these individuals has an outcome $Y_i$ that we observe if they respond to the survey, denoted by a binary variable $R_i$ where $R_i = 1$ indicates that unit $i$ responds, such as those units in our Pew sample; $n = \sum_i R_i$ is the total number of respondents. This response variable can include respondents to a probability sample or a convenience sample.\footnote{Often the response variable denotes whether a unit is in the sample or part of a target population, and so the respondents are not a subset of the overall population. For simplicity, we treat this as non-response, but our results can be extended to generalize to particular target populations.}
\edit{For a probability sample, $R_i = 1$ includes that unit $i$ was selected for the survey --- controlled by the analyst and part of the design --- and that unit $i$ responded --- outside of the analyst's control. For a convenience sample, $R_i=1$ simply denotes inclusion in the sample.}
In addition each individual is also associated with a set of $d$ categorical covariates $X_{i1},\ldots,X_{id}$, where the \edit{$\ell$\super{th}} covariate is categorical with \edit{$J_\ell$} levels, so that the vector of covariates $X_i \in [J_1] \times \ldots \times [J_d]$.

Rather than consider these variables individually, we will rewrite the vector $X_i$ as a single categorical covariate, the \emph{cell} for unit $i$, $S_i \in [J]$, where $J = J_1\times\ldots\times J_d$.\footnote{\edit{Note that there are always $J$ unique levels, but some may never appear in the target population}.} 
While we primarily consider a fixed population size $N$ and total number of distinct cells $J$, in Section \ref{sec:asymptotics} we will extend this setup to an asymptotic framework where both the population size and the number of cells can grow.
With these cells, we can summarize the covariate information. We denote $N^\calP \in \N^J$ as the \emph{population count vector} with $N^\calP_s = \sum_i \bbone\{S_i = s\}$, and $n^\calR \in \N^J$ as the \emph{respondent count vector}, with $n_s^\calR = \sum_i R_i \bbone\{S_i = s\}$. 
We will assume that we have access to these cell counts for both the respondent sample and the population.

Finally, for each cell $s$ we will consider a set of binary vectors $D_s^{(k)}$ that denote the cell in terms of its $k$\super{th} order interactions, and collect the vectors into matrices $D^{(k)} = [D_1^{(k)} \ldots D_J^{(k)}]'$, and into one combined $J \times J$ matrix $D = [D^{(1)},\ldots, D^{(d)}]$.\footnote{We focus on categorical covariates because it is common to only have population information for categorical covariates and so continuous covariates are often coarsened.
\edit{However, the procedures we describe below can be adapted for continuous covariates by incorporating more structure. For example, we can consider a polynomial basis expansion to include higher order moments, both marginally and jointly for interactions.}}
Figure \ref{fig:D_example} shows an example of $D^{(1)}$ and $D^{(2)}$ with three covariates: a binary for self-reported female, discretized age, and party identification \edit{(2, 4, and 3 levels, leading to $J = 24$ distinct cells)}. 
\edit{The (1 + 3 + 1 + 2) 7 columns of $D^{(1)}$ represent the margins of the three covariates, while the ($3 \times (1 + 2) + 1 \times 2$) 11 columns of $D^{(2)}$ represent the 2\super{nd} order interaction terms. Each row corresponds to a distinct  interaction between these three covariates, where the black areas represent elements of this matrix that are equal to 1. There are 6 remaining columns in the overall $24 \times 24$ matrix $D$, corresponding to $D^{(3)}$, the 3\super{rd} order interactions, not shown here due to space constraints.}

\begin{figure}[tb]
  \centering \includegraphics[width=.95\maxwidth]{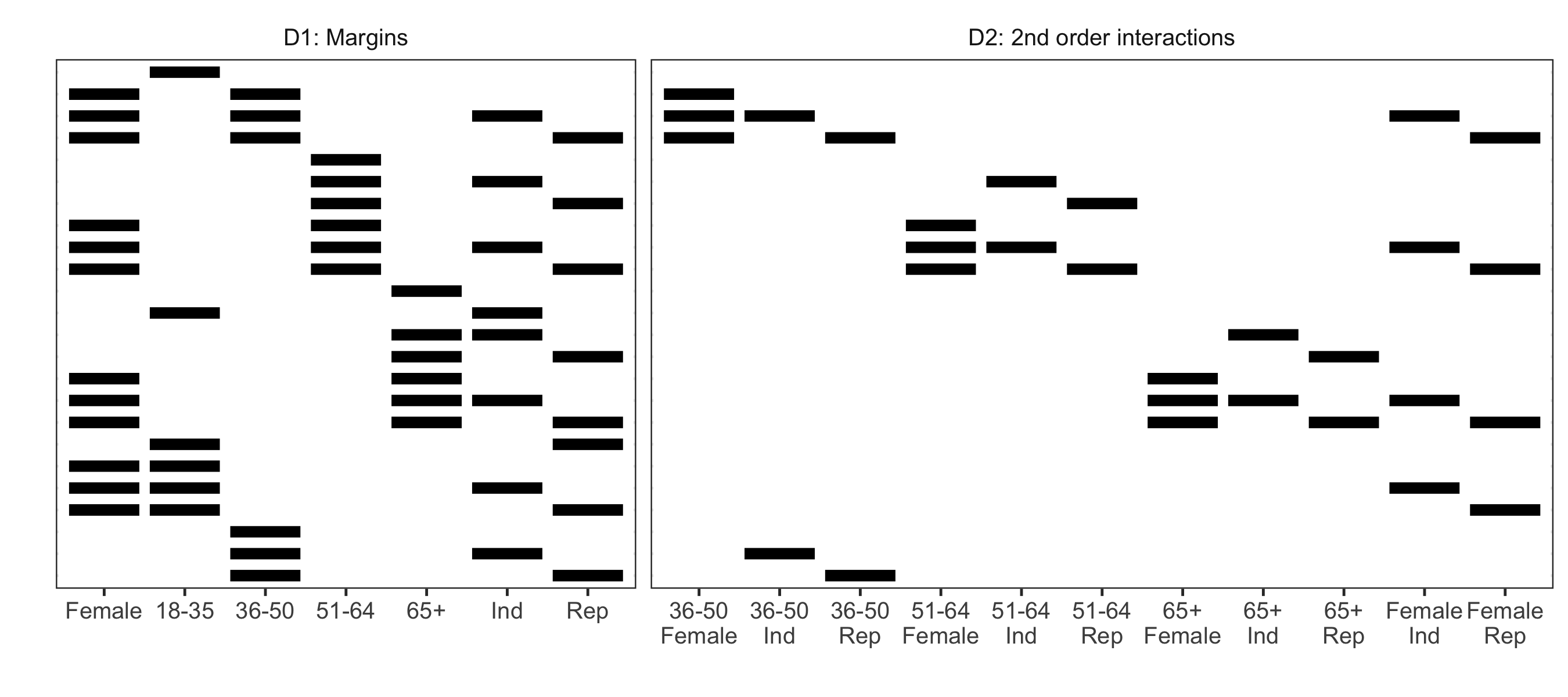}
\caption{Example of $D^{(1)}$ and $D^{(2)}$ with 3 covariates: age, a binary for self-reported female, and party identification.}
\label{fig:D_example}
\end{figure}

Our goal is to estimate the \emph{population} average outcome, which we can write as a cell-size weighted average of the within-cell averages, i.e.
\begin{equation}
  \label{eq:mu}
  \mu \equiv \frac{1}{N}\sum_{i=1}^N Y_i = \sum_{s=1}^J \frac{N^\calP_s}{N} \mu_s \;\;\; \text{ where } \;\;\; \mu_s \equiv \frac{1}{N_s^\calP}\sum_{S_{i} = s} Y_i.
\end{equation}
To estimate the population average, we will rely on the average outcomes we observe within each cell. For cell $s$, the responder average is 
\begin{equation}
  \label{eq:cell_avg}
  \bar{Y}_s \equiv \frac{1}{n_s^\calR}\sum_{S_i = s} R_i Y_i.
\end{equation}
We will assume that outcomes are missing at random within cells, so that the cell responder averages are unbiased for the true cell averages \citep{Rubin1976}:
\begin{assumption}[Missing at random within cells]
  \label{a:mar}
  For all cells $s=1,\ldots,J$, $\E\left[\bar{Y}_s \mid n_s^\calR \right] = \mu_s$.
\end{assumption}
\noindent We will denote the \emph{propensity score}  as $P(R_i = 1) \equiv \pi_i$, and the probability of responding conditional on being in cell $s$ as $\pi(s) \equiv \frac{1}{N_s^\calP}\sum_{S_{i} = s} \pi_i$. \edit{For a probability sample, $\pi_i$ denotes the joint probability of both selection into the survey and responding. The analyst knows and controls the selection probabilities but does not know the probability of response given selection. For a convenience sample  $\pi_i$ is the unknown probability of inclusion in the sample. For both cases the overall propensity score $\pi_i$ is unknown.} We assume that this probability is non-zero for all cells.
\begin{assumption}
  \label{a:overlap}
  % The \emph{propensity score} $P(R_i = 1) \equiv \pi_i > 0$ for all $i=1,\ldots,N$.
  $\pi(s) >0$ for all $s=1,\ldots,J$.
\end{assumption}
 \noindent These assumptions allow us to identify the overall population average using only the observed data. However, in order for Assumption \ref{a:mar} to be plausible, we will need the cells to be very fine-grained and have covariates that are quite predictive of non-response and the outcome. As we will see below, this creates a trade-off between identification and estimation: the former becomes more plausible with more fine-grained information, while the latter becomes more difficult \citep[see also][]{damour2020overlap}.

\subsection{Review: Raking and post-stratification}
\label{sec:raking}

We first consider estimating $\mu$ by taking a weighted average of the respondents' outcomes, with weights $\hat{\gamma}_i$ for unit $i$. Because the cells are the most fine-grained information we have, we will restrict the weights to be constant within each cell, relying on Assumption \ref{a:mar} that outcomes are missing at random within cells. We denote the estimated weight for cell $s$ as $\hat{\gamma}(s)$, and estimate the population average $\mu$ via:
\begin{equation}
  \label{eq:weight_est}
  \hat{\mu}\left(\hat{\gamma}\right) \equiv \frac{1}{N}\sum_{i = 1}^N R_i \hat{\gamma}_i Y_i = \frac{1}{N}\sum_s n_s^\calR \hat{\gamma}(s) \bar{Y}_s.
\end{equation}

\noindent If the individual probabilities of responding were known, we could choose to weight cell $S$ by the inverse of the propensity score, $\gamma(s) = \frac{1}{\pi(s)}$ \citep{Horvitz1952}. Unfortunately, the response probabilities are unknown. Instead, one class of procedures estimates the propensity score as $\hat{\pi}(s)$ and weights cell $s$ by the inverse \emph{estimated} propensity score $\hat{\gamma}(s) = \frac{1}{\hat{\pi}(s)}$.

One way to estimate the propensity score is as the proportion of the population in cell $s$ that responded, $\frac{n_s^\calR}{N_s^\calP}$, this leads to \emph{post-stratification} weights $\hat{\gamma}^\ps(s) = \frac{N_s^\calP}{n_s^\calR}$. Because the outcomes are missing at random within each cell by Assumption \ref{a:mar}, these post-stratification weights will lead to an unbiased estimator for the population average $\mu$. However, this estimator is only defined if there is at least one responder within each cell, and therefore it is often infeasible in practice. In even moderate dimensional cases it is unlikely there is at least one respondent for each cell. As we show for our application in Figure \ref{fig:n_empty_cell}, after including all eight covariates, less than a quarter of the population is represented by the non-empty cells in the sample.

One common alternative chooses weights so that the weighted \emph{marginal} distribution of the covariates exactly matches that of the full population. This \emph{raking on margins} procedure solves a convex optimization problem that finds the minimally ``disperse'' weights that satisfy this balance constraint \citep{Deming1940,Deville1992,Deville1993}.
Specifically, we find weights that solve
\begin{equation}
  \label{eq:raking}
  \begin{aligned}
    \min_{\gamma} \;\; & \sum_{s} n_s^\calR \gamma(s)^2\\
    \text{subject to } & \sum_s D_s^{(1)} n_s^\calR \gamma(s) = \sum_s D_s^{(1)} N_s^\calP\\
    & L \leq \gamma(s) \leq U \;\;\; \forall \;\; s = 1,\ldots,J.
  \end{aligned}
\end{equation}
These are the minimum variance weights that exactly balance first order margins. In addition to the variance penalty in the objective, we constrain the weights to be between a lower bound $L$ and an upper bound $U$. Setting the lower bound $L = 0 $ and the upper bound $U = \infty$ restricts the normalized cell weights, $\frac{1}{N}\gamma(s) n_s^\calR$ to be in the $J-1$ simplex. This ensures that the imputed cell averages are in the convex hull of the respondents' values and so do not extrapolate, and that the resulting estimator $\hat{\mu}(\hat{\gamma})$ is between the minimum and maximum outcomes in the sample.\footnote{
If we allow unbounded extrapolation and set $L = -\infty$ and $U = \infty$, the resulting estimator will be equivalent to linear regression weights from a linear regression of the outcome on first order indicators $D^{(1)}$ \citep{benmichael2020_ascm}. 
Many other choices of constraints and penalty functions are possible, including penalizing the distance to known design weights, see \citet{Deville1992}; these can also be incorporated into the discussion below.}
In contrast to the post-stratification weights, we typically expect weights to exist that exactly balance on the marginal distributions. However, failing to adjust for higher order interaction terms has the potential to induce severe bias.
\edit{Several papers have proposed ``soft'' or ``penalized'' calibration approaches to relax the exact calibration constraint in Equation \eqref{eq:raking}, allowing for approximate balance in some covariates \citep[see, e.g.][]{Huang1978,Rao1997,Park2009,Guggemos2010}. Our multilevel calibration approach below can be seen as adapting the soft calibration approach to full post-stratification.}

Before turning to our proposal for balancing higher-order interactions, we briefly describe some additional approaches.  \citet{chen2019calibrating} and \citet{mcconville2017model} discuss model-assisted calibration approaches which rely on the LASSO for variable selection.  \citet{caughey2017target} select higher-order interactions to balance using the LASSO.  \citet{hartman2021kpop} provide a kernel balancing method for matching joint covariate distributions between non-probability samples and a target population. \citet{linzer2011reliable} provides a latent class model for estimating cell probabilities and marginal effects in highly-stratified data.

Finally, in Section \ref{sec:drp}, we also discuss outcome modeling strategies as well as approaches that combine calibration weights and outcome modeling.
A small number of papers have previously explored this combination for non-probability samples. Closest to our setup is \citet{chen2020dr_surveys}, who combine inverse propensity score weights with a linear outcome model and show that the resulting estimator is doubly robust. Related examples include \citet{yang2020doubly}, who give high-dimensional results for a related setting; and \citet{si2020bayesian}, who combine weighting and outcome modeling in a Bayesian framework.

%%%
%%% MULTILEVEL CALIBRATION WEIGHTS
%%%
\section{Multilevel calibration: approximate post-stratification}
\label{sec:multilevel_weights}

We now propose \emph{multilevel calibration} weights, which bridge the gap between post-stratification and raking on the margins. First, we inspect the finite-sample estimation error and mean square error of the weighting estimator $\hat{\mu}(\hat{\gamma})$ for a set of weights $\hat{\gamma}$ \edit{that are deterministic functions of the cell counts $n^\calR$}, differentiating the impact of imbalance in lower- and higher-order terms on the bias. We then use this decomposition to find weights that control the components of the MSE by \emph{approximately} post-stratifying while maintaining raking on the margins.

\subsection{Estimation error}
\label{sec:estimation_error}

We begin by inspecting the estimation error $\hat{\mu}(\gamma) - \mu$ for weights $\gamma$. Define the \emph{residual} for unit $i$ as $\varepsilon_i \equiv Y_i - \mu_{S_i}$, and the average respondent residual in cell $s$ as $\bar{\varepsilon}_s = \frac{1}{n_s^\calR}\sum_{S_i = s}R_i \varepsilon_i$. The estimation error decomposes into a term due to imbalance in the cell distributions and a term due to idiosyncratic variation within cells: 
\begin{equation}
  \label{eq:est_error}
  \hat{\mu}\left(\hat{\gamma}\right) - \mu = \underbrace{\frac{1}{N}\sum_{s}\left(n_s^{\calR}\hat{\gamma}(s) - N_s^\calP\right) \times \mu_s}_{\text{imbalance in cell distribution}} + \underbrace{\frac{1}{N}\sum_{s}n_s^\calR \hat{\gamma}(s) \bar{\varepsilon}_s}_{\text{idiosyncratic error}}.
\end{equation}
By Assumption \ref{a:mar}, which states that outcomes are missing at random within cells, the idiosyncratic error will be zero on average, and so the bias will be due to imbalance in the cell distribution. By H\"{o}lder's inequality, we can see that the mean square error, conditioned on the number of respondents in each cell, is 
\begin{equation}
  \label{eq:condl_mse}
  \begin{aligned}
    \E\left[\left(\hat{\mu}\left(\hat{\gamma}\right) - \mu\right) ^ 2 \mid n^\calR \right] & = \underbrace{\frac{1}{N^2}\left(\sum_s \left(n_s^\calR \hat{\gamma}(s) - N_s^\calP\right)\mu_s\right)^2}_{\text{bias}^2} + \underbrace{\sum_{s} \left(\frac{n_s^{\calR}}{N}\right)^2\hat{\gamma}(s)^2\sigma_s^2}_{\text{variance}}\\
    & \leq \frac{1}{N^2}\sum_s\mu_s^2  \times \underbrace{\sum_s\left(n_s^\calR \hat{\gamma}(s) - N_s^\calP\right)^2}_{\text{imbalance in cell distribution}} + \underbrace{\sigma^2 \sum_{s} \left(\frac{n_s^{\calR}}{N}\right)^2\hat{\gamma}(s)^2}_{\text{noise}},
  \end{aligned}
\end{equation} 
where $\sigma^2_s = \Var(\bar{Y}_s \mid n^\calR)$ and $\sigma^2 = \max_s \sigma^2_s$. 
We therefore have two competing objectives if we want to control the mean square error for any given realization of our survey.
To minimize the bias we want to find weights that control the imbalance between the true and weighted proportions within each cell. 
To minimize the variance we want to find \edit{``diffuse''} weights so that the sum of the squared weights is small.

The decomposition above holds for imbalance measures across all of the strata, without taking into account their multilevel structure. 
In practice, we expect cells that share features to have similar outcomes on average.
We can therefore have  finer-grained control by 
leveraging our representation of the cells into their first order marginals $D_s^{(1)}$ and interactions of order $k$, $D_s^{(k)}$.
To do this, consider the infeasible population regression using the $D_s^{(k)}$ representation as regressors,
\begin{equation}
  \label{eq:pop_reg}
  \min_\eta \sum_{i=1}^N \left(Y_i - \sum_{k=1}^d D_{S_i}^{(k)} \cdot \eta_k \right)^2.
\end{equation}
With the solution to this regression, $\eta^\ast = (\eta_1^\ast,\ldots,\eta_d^\ast)$, we can decompose the population average in cell $s$ based on the interactions between the covariates, $\mu_s = \sum_{k=1}^d D_s^{(k)} \cdot \eta_k^\ast$. 
This decomposition in terms of the multilevel structure allows us to understand the role of imbalance in lower- and higher-order interactions on the bias.
Plugging this decomposition into Equation \eqref{eq:condl_mse} we see that the bias term in the conditional MSE further decomposes into the level of imbalance for the $k$\super{th} order interactions weighted by the strength of the interactions:
\begin{equation}
  \label{eq:condl_mse_popreg}
  \begin{aligned}
    \E\left[\left(\hat{\mu}\left(\hat{\gamma}\right) - \mu\right) ^ 2 \mid n^\calR \right] & = \frac{1}{N^2}\left(\sum_{k=1}^d \eta^{\ast}_k \cdot \sum_s \left(n_s^\calR \hat{\gamma}(s) - N_s^\calP \right)D_s^{(k)}\right)^2 + \sum_{s} \left(\frac{n_s^{\calR}}{N}\right)^2\hat{\gamma}(s)^2\sigma_s^2\\
    & \leq \frac{1}{N^2}\left(\sum_{k=1}^d \left\|\eta^{\ast}_k\right\|_2 \left\|\sum_s \left(n_s^\calR \hat{\gamma}(s) - N_s\right)D_s^{(k)}\right\|_2\right)^2 + \sigma^2\sum_{s} \left(\frac{n_s^{\calR}}{N}\right)^2 \hat{\gamma}(s)^2.
  \end{aligned}
\end{equation}

Typically, we expect that the ``main effects'' will be stronger than any of the interaction terms and so the coefficients on the first order terms, $\|\eta^\ast_1\|_2$, will be large relative to the coefficients for higher order terms. This is why raking on the margins as in Equation \eqref{eq:raking}---which only controls the main effects and implicitly assumes an additive functional form---is often seen as a good approximation \citep{mercer2018weighting}. However, ignoring higher order interactions entirely can lead to bias. We therefore propose to find weights that prioritize main effects while still minimizing imbalance in interaction terms when feasible.

\subsection{Multilevel calibration} 
\label{sec:multilevel_calibration}
We now design a convex optimization problem that controls the components of the conditional MSE on the right hand side of Equation \eqref{eq:condl_mse_popreg}. 
\edit{To do this, we apply the ideas and approaches developed for approximate balancing weights \citep[e.g.][]{Zubizarreta2015,Athey2018_residual,Wong2018, Hirshberg2019,Wang2019,Tan2020_calibrated,Ning2020_cbps} to the problem of controlling for higher order interactions, using our MSE decomposition as a guide.}
%To do this, 
We find weights that control the imbalance in all interactions in order to control the bias, while penalizing the sum of the squared weights to control the variance. Specifically, we solve the following optimization problem:
\begin{equation}
  \label{eq:primal}
  \begin{aligned}
    \min_{\gamma \in \R^J} \;\; & \sum_{k=2}^d \frac{1}{\lambda_k} \left\|\sum_s D_s^{(k)} n_s^\calR \gamma(s) - D_s^{(k)} N_s^\calP \right\|_2^2 + \sum_{s} n_s^\calR \gamma(s)^2\\[0.5em]
  \text{subject to} \;\; & \sum_s D_s^{(1)} n_s^\calR \gamma(s) = \sum_s D_s^{(1)} N_s^\calP\\[0.5em]
  & L \leq \gamma(s) \leq U\;\;\; \forall s=1,\ldots J.
  \end{aligned}
\end{equation}

\noindent where the $\lambda_k$ are hyper-parameters, discussed below.

We can view this optimization problem as adding an additional objective to the usual raking estimator in Equation \eqref{eq:raking}, optimizing for higher order balance. 
As with the raking estimator, the multilevel calibration weights are constrained to \emph{exactly balance} first order margins. Subject to this exact marginal constraint, the weights then minimize the imbalance in $k$\super{th} order interactions for all $k=2,\ldots,d$.
In this way, the multilevel calibration weights approximately post-stratify by optimizing for balance in higher-order interactions rather than requiring exact balance in all interactions as the post-stratification weights do. 
Following the bias-variance decomposition in Equations \eqref{eq:condl_mse} and \eqref{eq:condl_mse_popreg}, this objective is penalized by the sum of the squared weights. In addition, we also constrain the weights to be between the user-defined lower and upper limits, $L$ and $U$. As with the raking estimator, we can also replace the sum of the squared weights with a different penalty function, including penalizing deviations from known sampling weights following \citet{Deville1992}.
Finally, we could solve a variant of Equation \eqref{eq:primal} without the multilevel structure encoded by the $D_s^{(k)}$ variables. This would treat cells as entirely distinct and perform no aggregation across cells while approximately post-stratifying. From our discussion in Section \ref{sec:estimation_error}, this would ignore the potential bias gains from directly leveraging the multilevel structure.

\subsubsection{Hyperparameter selection}
An important component of the optimization problem are the hyper-parameters $\lambda_k$ for $k=2,\ldots,d$. These hyper-parameters control the relative priority that balancing the higher-order interactions receives in the objective in an inverse relationship. If $\lambda_k$ is large, then the weights will be more regularized and the $k$\super{th} order interaction terms will be less prioritized. In the limit as all $\lambda_k \to \infty$, no weight is placed on any interaction terms, and Equation \eqref{eq:primal} reduces to raking on the margins. Conversely, if $\lambda_k$ is small, more importance will be placed on balancing $k$\super{th} order interactions. For example, if $\lambda_2 = 0$, then the optimization problem will rake on margins \emph{and} second order interactions. As all $\lambda_k \to 0$ we recover post-stratification weights, if they exist.\footnote{If we allow unbounded extrapolation and set $L = -\infty$ and $U = \infty$, the resulting estimator will be equivalent to the MRP estimate \eqref{eq:mrp}, with regularization hyper-parameters $\lambda^{(k)}$ \citep{benmichael2020_ascm}.}

The hyper-parameters define a bias-variance trade-off. 
Smaller values will decrease the bias by improving the balance on higher order interaction terms. This comes at the expense of increasing variance by decreasing the effective sample size. In practice, we suggest explicitly tracing out this trade-off, as we show in Figure \ref{fig:balance_v_neff}. For a sequence of potential hyper-parameter values $\lambda^{(1)},\lambda^{(2)},\ldots$, set all of the hyper-parameters to be $\lambda_k = \lambda^{(j)}$. We can then look at the two components of the objective in Equation \eqref{eq:primal}, plotting the level of imbalance $\sum_{k=2}^d\left\|\sum_s D_s^{(k)} n_s^\calR \gamma(s) - D_s^{(k)} N_s^\calP \right\|_2^2$ against the effective sample size  $n^{\text{eff}} = \frac{\left(\sum_s n_s^\calR \hat{\gamma}(s)\right)^2}{\sum_s n_s^\calR \hat{\gamma}(s)^2}$. After fully understanding this trade-off, practitioners can choose a common $\lambda$ somewhere along the curve. For example, in our analysis in Section \ref{sec:numerical}, we choose $\lambda$ to 
achieve 95\% of the potential balance improvement in higher-order terms of $\lambda = 0$ relative to raking.
Finally, note that the optimization problem is on the scale of the population counts \edit{rather than the population proportions}. This means that with a common hyper-parameter, the higher-order interactions --- which have lower population counts $D^{(k)'} N^\calP$ --- will have less weight by design.

%%%
%%% DRP
%%%
\section{Double regression with post-stratification (DRP)}
\label{sec:drp}
So far we have focused on multilevel calibration, with weights that exactly match the first order marginals between the sample and the full population, while approximately balancing higher order interactions. 
This approach is independent of the outcomes and so we can use a single set of weights to estimate the population average for multiple different outcomes.
However, in some cases it may not be possible to achieve good covariate balance on higher order interactions, meaning that our estimates may still be biased.
Similar to model-assisted survey calibration with design weights \citep{Breidt2017}, we can address this by specializing to a particular outcome and by explicitly using outcome information to estimate and correct for the bias.

We begin by reviewing outcome modeling, especially multilevel regression with post-stratification (MRP), and then propose double regression with post-stratification (DRP).

\subsection{Using an outcome model for bias correction}
\label{sec:outcome_modelling}

A common alternative to the weighting approaches above is to estimate the population average $\mu$ using estimates of the cell averages $\mu_s$ \citep{Gelman1997}. These approaches take modelled estimates of the cell averages, $\hat{\mu}_s$ and \emph{post-stratify} them to the population totals as
\begin{equation}
  \label{eq:mrp}
  \hat{\mu}^{\mrp} = \frac{1}{N}\sum_{s}N_s^{\calP} \hat{\mu}_s.
\end{equation}
By smoothing estimates across cells, outcome modeling gives estimates of $\hat{\mu}_s$ even for cells with no respondents, thus sidestepping
the primary feasibility problem of post-stratification.
We use the term \emph{Multilevel Regression with Post-stratification} (MRP) to refer to the broad class of outcome modelling approaches that post-stratify modelled cell averages, and discuss particular choices of outcome model in Section \ref{sec:mrp_to_drp} below.

With weights $\hat{\gamma}$, we can use MRP to estimate the bias due to imbalance in higher order interactions by taking the difference between the MRP estimate for the population and a hypothetical estimate with population cell counts $n_s^\calR \hat{\gamma}(s)$:
\begin{equation}
    \label{eq:bias_est}
    \widehat{\text{bias}} = \hat{\mu}^\mrp - \frac{1}{N}\sum_sn_s^\calR\hat{\gamma}(s) \hat{\mu}_s = \frac{1}{N} \sum_s \hat{\mu}_s \times \left(N_s^\calP - n_s^\calR \hat{\gamma}(s)\right).
\end{equation}
This uses the outcome model to collapse the imbalance in the $J$ cells into a single diagnostic.
Our main proposal is to use this diagnostic to correct for any remaining bias from the multilevel calibration weights.
We refer to the estimator as \emph{Double Regression with Post-Stratification} (DRP), as it incorporates two forms of ``regression''---a regression of the outcome $\hat{\mu}(s)$ and a  regression of response $\hat{\gamma}(s)$ through the dual problem, as we discuss in Section \ref{sec:dual} below. %\eqref{eq:dual}.
We construct the estimator using weights $\hat{\gamma}(s)$ and cell estimates $\hat{\mu}_s$ as
\begin{equation}
  \label{eq:drp}
  \begin{aligned}
    \hat{\mu}^\drp\left(\hat{\gamma}\right) & =  \hat{\mu}\left(\hat{\gamma}\right) & + &\;\frac{1}{N}\sum_s  \hat{\mu}_s  \times \underbrace{\left(N_s^\calP  - n_s^\calR \hat{\gamma}(s) \right)}_{\text{imbalance in cell } s}\\
    & =  \hat{\mu}^\text{mrp} & + & \;\frac{1}{N}\sum_s  n_s^\calR \hat{\gamma}(s)  \times  \underbrace{(\bar{Y}_s - \hat{\mu}_s)}_{\text{error in cell }s}.
  \end{aligned}
\end{equation}
The two lines in Equation \eqref{eq:drp} give two equivalent perspectives on how the DRP estimator adjusts for imbalance.
The first line begins with the multilevel calibration estimate $\hat{\mu}(\hat{\gamma})$ and then adjusts for the estimate of the bias using the outcome model $\frac{1}{N}\sum_s \hat{\mu}_s(N_S^\calP - n_s^\calR\hat{\gamma}(s))$.
If the population and re-weighted cell counts are substantially different in important cells, the adjustment from the DRP estimator will be large. On the other hand, if the population and re-weighted sample counts are close in all cells then $\hat{\mu}^\drp(\hat{\gamma})$ will be close to $\hat{\mu}(\hat{\gamma})$. In the limiting case of post-stratification where all the counts are equal, the two estimators will be equivalent, $\hat{\mu}^{\drp}(\hat{\gamma}^\ps) = \hat{\mu}(\hat{\gamma}^\ps)$. 
The second line instead starts with the MRP estimate, $\hat{\mu}^\text{mrp}$, and adjusts the estimate based on the error within each cell. If the outcome model has poor fit in cells that have large weight, then the adjustment will be large.

As we discuss next, this uses outcome information to improve the statistical properties of the estimator, at the expense of specializing it to a particular outcome.
More broadly, this estimator is a special case of augmented approximate balancing weights estimators \citep{Athey2018_residual,Hirshberg2019_augment,Tan2020_model_assisted} and is closely related to generalized regression estimators \citep{Cassel1976}, augmented IPW estimators \citep{Robins1994, chen2020dr_surveys}, and bias-corrected matching estimators \citep{Rubin1976}.

\subsection{Bias reduction and asymptotic normality}
\label{sec:asymptotics}
We now show that adjusting for imbalance with an outcome model reduces bias, and that this bias reduction allows for inference through asymptotic normality by ensuring that the bias is asymptotically smaller than the variance.
To see this, we can again inspect the estimation error. Analogous to Equation \eqref{eq:est_error}, the difference between the DRP estimator and the true population average is
\begin{equation}
  \label{eq:est_error_drp}
  \hat{\mu}^\drp\left(\hat{\gamma}\right) - \mu = \frac{1}{N}\sum_{s}\underbrace{\left(n_s^\calR\hat{\gamma}(s) - N_s^\calP\right)}_{\text{imbalance in cell }s } \times \underbrace{(\hat{\mu}_s - \mu_s)}_{\text{error in cell } s} + \underbrace{\frac{1}{N}\sum_{s=1}^sn_s^\calR \hat{\gamma}(s) \bar{\varepsilon}_s}_{\text{noise}}.
\end{equation}
Comparing to Equation \eqref{eq:est_error}, where the estimation error depends solely on the imbalance and the true cell averages, we see that the estimation error for DRP depends on the \emph{product} of the imbalance from the weights and the estimation error from the outcome model. Therefore, if the model is a reasonable predictor for the true cell averages, the estimation error will decrease.

To formalize this, we consider an asymptotic framework with a sequence of finite populations of size $N$, and let $N \to \infty$.
In this framework, we make several modifications to our setup. First, we strengthen Assumption \ref{a:overlap} to hold strictly for all population sizes $N$, so that $\min_s \pi(s) \geq \pi^\ast > 0$, and we allow the lower bound $\pi^\ast$ to change with the population size $N$.
This ensures that we have a strictly non-zero probability of having a respondent in each cell in all the populations we consider. 
We then allow the number of cells $J$ to grow with the population size $N$. Denoting $\kappa \equiv \|D^{-1}\|_2\|D\|_2$ as the condition number of the $J \times J$ matrix $D$, we restrict the number of cells so that $\frac{\kappa^2 J}{(\pi^\ast N)^\alpha}$ converges to a constant for a rate $0 \leq \alpha < 1$.
We also adjust the multilevel calibration procedure to approximately (rather than exactly) rake on margins
without regularization, ensuring that there is always a feasible solution for every finite population.
Finally, we restrict the response variables $R_i$ to be independent.
We detail these and other regularity assumptions on the design in Appendix \ref{sec:proofs}.

\begin{theorem}
\label{thm:asymp_normal}
If $\frac{\kappa^2 J}{(\pi^\ast N)^\alpha}$ converges to a constant for some $0 \leq \alpha < 1$, and $\sum_s \left(\hat{\mu}_s - \mu_s\right)^2 = o_p\left(\left(\pi^\ast N \right)^{-\alpha/2}\right)$, then under the regularity conditions in Assumption \ref{a:regularity}, the DRP estimator $\hat{\mu}^\drp(\hat{\gamma})$
is
\[
\hat{\mu}^\drp(\hat{\gamma}) - \mu = \frac{1}{N}\sum_{i=1}^N\frac{R_i}{\pi(S_i)}\varepsilon_i + o_p\left(\frac{1}{\pi^\ast \sqrt{N}}\right).
\]
Furthermore, if $\frac{1}{N \sqrt{V_N}} \sum_{i=1}^N \frac{R_i}{\pi(S_i)}\varepsilon_i \Rightarrow N(0,1)$ for $V_N = \Var\left(\frac{1}{N}\sum_{i=1}^N \frac{R_i}{\pi(S_i)}\varepsilon_i\right)$, then $$\frac{\hat{\mu}^\drp(\hat{\gamma}) - \mu}{\sqrt{V_N}} \Rightarrow N(0,1).$$
\end{theorem}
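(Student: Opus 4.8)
The natural starting point is the exact finite-sample decomposition of the DRP estimation error in Equation \eqref{eq:est_error_drp},
\[
\hat{\mu}^\drp(\hat{\gamma}) - \mu = \frac{1}{N}\sum_s \left(n_s^\calR \hat{\gamma}(s) - N_s^\calP\right)(\hat{\mu}_s - \mu_s) + \frac{1}{N}\sum_s n_s^\calR \hat{\gamma}(s)\bar{\varepsilon}_s,
\]
writing $b_s \equiv n_s^\calR \hat{\gamma}(s) - N_s^\calP$ for the cell-count imbalance. The plan is to show that the first (``bias'') term is $o_p((\pi^\ast\sqrt{N})^{-1})$, and that the second (``noise'') term equals the oracle inverse-propensity average $\frac{1}{N}\sum_i \frac{R_i}{\pi(S_i)}\varepsilon_i = \frac{1}{N}\sum_s \frac{n_s^\calR}{\pi(s)}\bar{\varepsilon}_s$ up to an $o_p((\pi^\ast\sqrt{N})^{-1})$ remainder. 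Adding these gives the asymptotically linear representation, after which the distributional conclusion follows from Slutsky's theorem.

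For the bias term, Cauchy--Schwarz gives $\left|\frac{1}{N}\sum_s b_s(\hat{\mu}_s - \mu_s)\right| \le \frac{1}{N}\|b\|_2\,\big(\sum_s(\hat{\mu}_s-\mu_s)^2\big)^{1/2}$, so by the assumed outcome rate $\sum_s(\hat{\mu}_s-\mu_s)^2 = o_p((\pi^\ast N)^{-\alpha/2})$ it suffices to control $\|b\|_2$. Here I would pass from cell-count imbalance to imbalance in the interaction basis: since $D$ is invertible, $\|b\|_2 \le \|D^{-1}\|_2\,\|D'b\|_2$, where $\|D'b\|_2^2 = \sum_{k=1}^d \big\|\sum_s b_s D_s^{(k)}\big\|_2^2$ is (a rescaling of) the objective that multilevel calibration minimizes, with the $k=1$ block driven to (approximately) zero by the margin constraint. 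I would bound the optimal objective by evaluating it at a feasible comparison point, namely the oracle weights $1/\pi(s)$, whose induced imbalance $N_s^\calP - n_s^\calR/\pi(s)$ is mean-zero with $\sum_s \E[(N_s^\calP - n_s^\calR/\pi(s))^2 \mid N^\calP] = O(N/\pi^\ast)$ by independence of the $R_i$ and overlap. Combining $\|D^{-1}\|_2 = \kappa/\|D\|_2$ with this fluctuation bound yields $\|b\|_2 = O_p(\kappa\sqrt{N/\pi^\ast})$, and the scaling assumption $\kappa^2 J/(\pi^\ast N)^\alpha \to \text{const}$ is precisely what forces the product with the outcome rate to be $o_p((\pi^\ast\sqrt{N})^{-1})$.

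For the noise term I would write $\frac{1}{N}\sum_s n_s^\calR\hat{\gamma}(s)\bar{\varepsilon}_s = \frac{1}{N}\sum_s \frac{n_s^\calR}{\pi(s)}\bar{\varepsilon}_s + \frac{1}{N}\sum_s\big(n_s^\calR\hat{\gamma}(s) - \tfrac{n_s^\calR}{\pi(s)}\big)\bar{\varepsilon}_s$, the first piece being the oracle. Because $\hat{\gamma}$ is a deterministic function of $n^\calR$ and $\E[\bar{\varepsilon}_s\mid n^\calR]=0$ under Assumption \ref{a:mar}, the remainder is conditionally mean-zero given $n^\calR$, with conditional variance bounded by $\frac{\sigma^2}{N^2}\sum_s(n_s^\calR\hat{\gamma}(s)-n_s^\calR/\pi(s))^2 \le \frac{2\sigma^2}{N^2}\big(\|b\|_2^2 + \sum_s(N_s^\calP - n_s^\calR/\pi(s))^2\big)$. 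Both pieces were already controlled above ($O_p(\kappa^2 N/\pi^\ast)$ and $O_p(N/\pi^\ast)$ respectively), so Chebyshev's inequality conditional on $n^\calR$ shows the remainder is $o_p((\pi^\ast\sqrt{N})^{-1})$ under the same scaling assumption.

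Assembling the two bounds yields the stated linear expansion. For the second claim I would apply Slutsky: writing the expansion as $\hat{\mu}^\drp(\hat{\gamma}) - \mu = \frac{1}{N}\sum_i \frac{R_i}{\pi(S_i)}\varepsilon_i + r_N$ with $r_N = o_p((\pi^\ast\sqrt{N})^{-1})$, the regularity conditions in Assumption \ref{a:regularity} ensure $\sqrt{V_N}$ is of order $(\pi^\ast\sqrt{N})^{-1}$, so $r_N/\sqrt{V_N}\to_p 0$; dividing through by $\sqrt{V_N}$ and invoking the assumed CLT for the oracle term gives $(\hat{\mu}^\drp(\hat{\gamma}) - \mu)/\sqrt{V_N}\Rightarrow N(0,1)$. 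The main obstacle is the imbalance bound on $\|b\|_2$: translating the calibration program's control of interaction-basis imbalance into a bound on raw cell-count imbalance, with the correct dependence on $\kappa$, $J$, and $\pi^\ast$, is the delicate step, and it is exactly where the condition-number scaling assumption must be used.
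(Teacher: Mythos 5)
Your plan follows essentially the same route as the paper's proof: the decomposition \eqref{eq:est_error_drp}, Cauchy--Schwarz plus the feasibility of the oracle weights $1/\pi$ in the calibration program to control the cell-count imbalance (the paper's Lemma \ref{lem:balance_bound}), and a conditional mean-zero/Chebyshev argument for the noise remainder (Lemma \ref{lem:noise_converges}), finishing with Slutsky under Assumption \ref{a:not_super_eff}. The only substantive deviation is that you bound the oracle imbalance $\|N^\calP - \mathrm{diag}(n^\calR)\tfrac{1}{\pi}\|_2$ by a direct second-moment computation rather than the paper's sub-Gaussian discretization bound (Lemma \ref{lem:pscore_balance}); this is correct and in fact slightly tighter, and the rest of the argument goes through unchanged.
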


Theorem \ref{thm:asymp_normal} shows us that as long as the modelled cell averages estimate the true cell averages well enough,
the model and the calibration weights combine to ensure that 
the bias will be negligible relative to the variance, asymptotically. 
The rate at which the number of cells grows with the population size affects how well the modelled cell averages need to perform. If the number of cells is constant, the model needs only to be consistent. On the other hand, if the number of cells grows quickly then Theorem \ref{thm:asymp_normal} implicitly requires more structure on the outcomes, so that the model can estimate the cell averages well enough.
As we discuss in Section \ref{sec:estimation_error}, we expect main effects to be much stronger than higher order interaction terms in practice. Therefore, including a new covariate or using a finer discretization of continuous covariates will primarily impact the outcome model through these main effects, leading to a substantial amount of underlying structure even though the total number of cells is increasing. As we discuss in Section \ref{sec:discussion}, there are alternative ways to account for an increasing number of cells $J$ that may allow $J$ to grow more quickly relative to the population size $N$. We leave a thorough investigation of these alternatives to future work.

The asymptotic variance of the DRP estimator depends on the variance of the residuals $\varepsilon_i$, which we expect to have much lower variance than the raw outcomes. So asymptotically the DRP estimator will also have lower variance than the oracle Horvitz-Thompson estimator,
similar to other model-assisted estimators \citep{Breidt2017}.
However, note that the minimum cell response probability $\pi^\ast$ affects the quality of the asymptotic approximation; if individuals in some cells are very unlikely to respond, it will be difficult both to model those averages and achieve good balance from the respondents.
\edit{Theorem \ref{thm:asymp_normal} is analogous to recent double-robustness results in survey estimation, such as from \citet{chen2020dr_surveys}. Rather than estimating a parametric outcome and non-response model, 
we instead consider all interactions.}

Finally, in our analysis in Section \ref{sec:numerical}, we use Theorem \ref{thm:asymp_normal} to construct confidence intervals for the population total $\mu$. To do this, we start with a plug-in estimate for the variance,
\begin{equation}
  \label{eq:std_err}
  \hat{V} = \frac{1}{N^2} \sum_{i=1}^n R_i \hat{\gamma}(S_i)^2 (Y_i - \hat{\mu}_{S_i})^2.
\end{equation}
We then construct approximate level $\alpha$ confidence intervals via $\hat{\mu}^\drp(\hat{\gamma}) \pm z_{1-\alpha/2}\sqrt{\hat{V}}$, where $z_{1-\alpha/2}$ is the $1-\alpha/2$ quantile of a standard normal distribution.

\subsection{Choosing an outcome model in MRP and DRP}
\label{sec:mrp_to_drp}

The choice of outcome model is crucial for both MRP and DRP.
As we discussed in Section \ref{sec:multilevel_weights},
we often believe that the strata have important hierarchical structure where  main effects and lower-order interactions are more predictive than higher-order interactions.
We consider two broad classes of outcome model that accommodate this structure: multilevel outcome models, which explicitly regularize higher-order interactions;  
and tree-based models, which implicitly regularize higher-order interactions.

\paragraph{Multilevel outcome model.} 
We first consider multilevel models, which have a linear form as $\hat{\mu}_s^{\mr} = \hat{\eta}^\mr \cdot D_s$, where $\hat{\eta}^\mr$ are the estimated regression coefficients \citep{Gelman1997, Ghitza2013, gao2020improving}.
MRP directly post-stratifies these model estimates, using the coefficients to predict the value in the population:
\[
\hat{\mu}^\mrp = \hat{\eta}^\mr \cdot \frac{1}{N}\sum_s N_s^\calP D_s.
\]
In contrast, the DRP estimator only uses the coefficients to adjust for any remaining imbalance after weighting,
\[
\hat{\mu}^\drp(\hat{\gamma}) = \hat{\mu}(\hat{\gamma}) + \hat{\eta}^\mr \cdot \left(\frac{1}{N} \sum_s D_s \left(N_s^\calP - n_s^\calR \hat{\gamma}(s)\right)\right).
\]
This performs bias correction.
 When we use a multilevel outcome model, we can also view the DRP estimator as a weighting estimator $$\hat{\mu}^\drp(\hat{\gamma}) = \frac{1}{N}\sum_{s}\tilde{\gamma}(s) n_s^\calR \bar{Y}_s.$$
In particular, 
when using the \emph{maximum a posteriori} (MAP) estimate of a multilevel model in the corresponding DRP estimator, the outcome model directly adjusts the weights
\[
  \tilde{\gamma}(s) = \hat{\gamma}(s) + \left(N^\calP - \text{diag}(n^\calR) \hat{\gamma}\right)'D\left(D'\text{diag}(n^\calR)D + Q \right)^{-1} D_s,
\]
where $Q$ is the prior covariance matrix associated with the multilevel model \citep{Breidt2017}.
Importantly, while the multilevel calibration weights $\hat{\gamma}$ are constrained to be non-negative, the DRP weights allow for \emph{extrapolation} outside of the support of the data \citep{benmichael2020_ascm}.

\paragraph{Trees and general weighting methods.} 
More generally, we can consider an outcome model that smooths out the cell averages, using a weighting function between cells $s$ and $s'$, $W(s, s')$, to estimate the population cell average, $\hat{\mu}_s = \sum_{s'}W(s,s')n_{s'}^\calR \bar{Y}_{s'}$.
A multilevel model is a special case that smooths the cell averages by partially pooling together cells with the same lower-order features.
In this more general case the DRP estimator is again a weighting estimator, with adjusted weights
\[
  \tilde{\gamma}(s) = \hat{\gamma}(s) + \sum_{s'}W(s,s')(N_{s'}^\calP -n_{s'}^\calR\hat{\gamma}(s')).
\]
Here the weights are adjusted by a smoothed average of the imbalance in similar cells. In the extreme case where the weight matrix is diagonal with elements $\frac{1}{n^\calR}$, the DRP estimate reduces to the post-stratification estimate, as above. 

One important special case are tree-based methods such as those considered by \citet{Montgomery2018} and \citet{bisbee2019barp}. These methods estimate the outcome via bagged regression trees, such as random forests \citep{Breiman2001}, gradient boosted trees \citep{Friedman2001}, or Bayesian additive regression trees \citep{Chipman2010}. These approaches 
% combine the predictions of different trees, and 
can be viewed as data-adaptive weighting estimators where the weight for cell $s$ and $s'$, $W(s,s')$, is proportional to the fraction of trees where cells $s$ and $s'$ share a leaf node \citep{Athey2019_grf}.
Therefore, the DRP estimator will smooth the weights by adjusting them to correct for the imbalance in cells that share many leaf nodes.

\paragraph{Bias-variance tradeoff.}
The key difference between MRP and DRP is what role the outcome model plays, and how one chooses the model to negotiate the bias-variance tradeoff.
Because MRP-style estimators \emph{only} use the outcome model, the performance of the outcome model completely determines the performance of the estimator.
For example, in a multilevel model we want to include higher-order interaction terms in order to reduce the bias. However, this can increase the variance to an unacceptable degree, so we choose a model with lower variance and higher bias.

In contrast, with DRP the role of the model is to correct for any potential bias remaining after multilevel calibration. Because we can only \emph{approximately} post-stratify, this bias-correction is key. It also means that DRP is less reliant on the outcome model, which only needs to adequately perform bias correction.
Therefore, the bias-variance tradeoff is different for DRP, 
and we prioritize bias over variance.
By including higher order interactions or deeper trees, the model will be able to adjust for any remaining imbalance in higher order interactions after weighting.

%%%
%%% COMPARISON TO IPW
%%%
\section{Comparison to inverse propensity score weighting via multilevel modelling}
\label{sec:dual}

We now show that the multilevel calibration approach is a form of inverse propensity score weighting with a multilevel non-response model. This connection is instructive, especially for DRP, because traditional propensity score models can have steep data requirements, often requiring detailed individual-level data for both the sample and the target population \citep[see][]{chen2020dr_surveys}. By contrast, the data requirements for multilevel calibration weights are somewhat weaker, requiring aggregate data on all interactions of interest.

In particular, when we enforce \emph{exact} balance on all interactions,  multilevel calibration weights are equivalent to IPW with propensity scores estimated via a fully-saturated generalized linear model (GLM) --- and both our proposed weights and traditional IPW weights are equivalent to post-stratification weights. 
As we show, the primary difference between the multilevel calibration approach and a multilevel GLM is in how the propensity score coefficients are regularized. Through the Lagrangian dual, we will see that the multilevel calibration approach implicitly regularizes the coefficients on interactions to guarantee balance while the multilevel GLM approach does not.

\subsection{Dual relation to multilevel non-response modelling}
We begin by deriving the Lagrangian dual to the optimization problem \eqref{eq:primal}. 
By inspecting the dual, we can characterize the implicit propensity score model associated with the weights, moving smoothly between raking on margins and post-stratification. This builds on recent results noting the connection between approximate balancing weights estimators and \edit{calibrated} regularized propensity score estimation \citep[e.g.][]{Wang2019, Hirshberg2019, Zhao2019, Chattopadhyay2019,Tan2020_calibrated, benmichael2020_lor} as well as a long history linking raking weights to IPW with a propensity score that is log-linear in the first-order marginals \citep{Little1991}.

The dual problem involves optimizing over a series of Lagrange multipliers. The raking constraint induces one set of Lagrange multipliers $\beta^{(1)}$. In the same way, the approximate post-stratification objective induces an additional set of Lagrange multipliers $\beta^{(k)}$--- one for each group of higher order interactions. These dual variables are then chosen to optimize a regularized objective function.\footnote{For ease of exposition we have derived the Lagrangian dual for the the usual case where $L = 0$ and $U = \infty$. For general $L < U$,  $\gamma(s;\hat{\beta})$ will be truncated at $L$ and $U$, and the loss function will change slightly.}

\begin{proposition}
  \label{prop:dual}
  If a feasible solution to \eqref{eq:primal} exists, the Lagrangian dual problem
  with $L = 0$ and $U = \infty$
  is
  \begin{equation}
    \label{eq:dual}
    \min_{\beta} \;\; \underbrace{\frac{1}{2N}\sum_{i=1}^n\left[ R_i \max\left\{0, \sum_{k=1}^d D_{S_i}^{(k)} \cdot \beta^{(k)}\right\}^2 - \sum_{k=1}^d D_{S_i}^{(k)} \cdot \beta^k\right]}_{\text{loss function } q(\beta)} + \underbrace{\sum_{k=2}^d\frac{\lambda_k}{2}\|\beta^k\|_2^2}_{\text{regularization}},
  \end{equation}
  where $\beta = (\beta^{(1)},\ldots,\beta^{(d)})$. If $\hat{\beta}$ solves \eqref{eq:dual}, the primal weights are recovered as 
  \begin{equation}
    \label{eq:dual_to_primal}
    \hat{\gamma}(s) = \max \left\{0, \sum_{k=1}^d D_{s_i}^{(k)} \cdot \hat{\beta}^{(k)}\right\} \equiv \gamma(s; \hat{\beta}).
  \end{equation}
\end{proposition}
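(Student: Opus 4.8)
The plan is to derive \eqref{eq:dual} as the Lagrangian dual of the convex program \eqref{eq:primal} specialized to $L = 0$, $U = \infty$, and then read off the primal recovery map \eqref{eq:dual_to_primal} from the stationarity condition of the inner minimization over $\gamma$. Since \eqref{eq:primal} has a convex quadratic objective, affine equality constraints, and simple non-negativity constraints, and since a feasible point is assumed to exist, Slater's condition holds trivially, strong duality applies, and we may exchange the inner minimization over $\gamma$ with the maximization over the multipliers.

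First I would put the higher-order penalty terms into a form amenable to dualization by introducing auxiliary imbalance vectors $z_k \equiv \sum_s D_s^{(k)} n_s^\calR \gamma(s) - \sum_s D_s^{(k)} N_s^\calP$ for each $k = 2,\ldots,d$, so that the objective becomes $\sum_{k\ge 2}\frac{1}{\lambda_k}\|z_k\|_2^2 + \sum_s n_s^\calR\gamma(s)^2$ subject to the defining equalities for the $z_k$, the raking constraint (the $k=1$ analogue, with $z_1 = 0$), and $\gamma(s)\ge 0$. I then attach a multiplier $\beta^{(k)}$ to each defining equality (the raking constraint supplies $\beta^{(1)}$) together with a multiplier for each non-negativity constraint, and form the Lagrangian, which now separates across the auxiliary variables and the cells.

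The two inner minimizations are where the structure emerges. Minimizing the unconstrained quadratic in each $z_k$ in closed form eliminates the auxiliary variables and produces the regularizer $\sum_{k\ge 2}\frac{\lambda_k}{2}\|\beta^{(k)}\|_2^2$ (up to fixing the scaling of $\beta^{(k)}$). The cell-wise minimization over $\gamma(s)\ge 0$ is the crux: for each $s$ the relevant terms are $n_s^\calR\gamma(s)^2$ minus a linear term whose coefficient is exactly $c_s n_s^\calR$ with $c_s \equiv \sum_{k} D_s^{(k)}\cdot\beta^{(k)}$, so this is a one-dimensional non-negatively constrained quadratic whose optimizer is the hinge $\gamma(s) = \max\{0,\,\sum_k D_s^{(k)}\cdot\beta^{(k)}\}$, precisely \eqref{eq:dual_to_primal}. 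Substituting this optimizer back collapses the quadratic-minus-linear expression into the squared-hinge contribution $R_i\max\{0,\cdot\}^2$ once I re-express the cell sum $\sum_s n_s^\calR(\cdot)$ as a sum over responding units via $n_s^\calR = \sum_i R_i\ind{S_i = s}$, which is the source of the $R_i$ and the $\sum_{i=1}^n$ in $q(\beta)$. Collecting the linear terms contributed by the constraint right-hand sides (the $N_s^\calP$ terms and the raking target) yields the $-\sum_k D_{S_i}^{(k)}\cdot\beta^{(k)}$ piece, and an overall sign convention turns the concave dual maximization into the stated minimization.

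The main obstacle I anticipate is extracting the hinge cleanly from the non-negativity constraint while tracking the normalizing constants so that the recovery map carries no spurious factor of $\tfrac12$: the scaling of each $\beta^{(k)}$ and the $\frac{1}{2N}$ prefactor must be chosen consistently so that the stationarity condition reads $\gamma(s)=\max\{0,\,\sum_k D_s^{(k)}\cdot\beta^{(k)}\}$ exactly. The remaining work, namely verifying strong duality under the stated feasibility assumption and the routine reindexing from cells to responding units, is standard. For general $L < U$ the same argument applies, but the cell-wise optimizer is truncated to $[L,U]$ and the squared-hinge loss is replaced by the corresponding piecewise-quadratic, as noted in the footnote to the proposition.
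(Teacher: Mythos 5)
Your proposal is correct and follows essentially the same route as the paper's proof: introduce auxiliary imbalance variables for the higher-order penalty terms, form the Lagrangian, eliminate the auxiliary variables in closed form to obtain the ridge regularizer, perform the cell-wise non-negatively constrained quadratic minimization to obtain the hinge map and squared-hinge loss, and invoke Slater's condition for strong duality. The only cosmetic difference is that the paper keeps the non-negativity constraint inside the domain of the inner minimization rather than introducing explicit multipliers for it, which is what your cell-wise argument effectively does anyway.
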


\noindent To connect this to propensity score estimation, we can inspect the minimizer of the expected loss, $\E[q(\beta)]$. The zero gradient condition for the expected loss is
\[
  \nabla \E[q(\beta)] = 0 \;\;\Longleftrightarrow\;\; \sum_s N_s^\calP \pi(s) \gamma(s;\beta)D_s = \sum_s N_s^\calP D_s.
\]
The unique weights that solve the expected zero gradient condition are precisely the inverse propensity weights $\gamma(s;\beta) = \frac{1}{\pi(s)}$. Therefore, the dual solution is a \emph{regularized} $M$-estimator for the propensity score, with a fully saturated propensity score model that includes all interactions.

\subsection{The role of regularization}

We now compare regularization in multilevel calibration weights versus more traditional multilevel GLM estimation for the propensity score.
These two models have the same starting point: both are $M$-estimators for the propensity score and, in the special case without regularization, both are equivalent to post-stratification weights and to each other.
Both estimators also partially pool the propensity score estimates across cells.
However, in practical settings where full post-stratification is infeasible, regularization affects the two approaches differently.
For multilevel calibration, 
the regularization in the dual problem \eqref{eq:dual} 
ensures a level of balance on interaction terms.
By contrast, for the multilevel GLM, 
the regularization 
instead controls 
a different quantity that is only indirectly relevant for estimating the population average.

To see this, we can examine the zero gradient conditions for the two approaches.
First, for multilevel calibration weights, the level of partial pooling directly relates to balance in the higher order interactions. 
The zero gradient condition for the regularized dual problem \eqref{eq:dual} implies that the imbalance in the $k$\super{th} order interactions is
\[  
  \frac{1}{N}\left\|\sum_s D_s^{(k)} n_s^\calR \gamma(s; \hat{\beta}) - \sum_s D_s^{(k)} N_s^\calP\right\|_2 = \lambda_k \|\hat{\beta}^{(k)}\|_2,
\]
Therefore $\lambda_k$ directly controls the level of balance in the $k$\super{th} order interactions, and so the level of regularization controls how far the re-weighted sample is from the target. 

We can compare this to the zero gradient condition of the propensity score $\pi(s;\hat{\beta})$ estimated via a multilevel GLM with equivalent hyper-parameters:
\[
  \frac{1}{N}\left\|\sum_{s=1}^J D_s^{(k)} n_s^\calR - \sum_{s=1}^JD_s^{(k)} \pi(s; \hat{\beta})N_s^\calP\right\|_2 = \lambda_k \|\hat{\beta}^{(k)}\|_2.
\]
Here the hyper-parameter $\lambda_k$ instead controls the difference between the observed sample counts and the expected counts under the model. This difference is only indirectly related to estimating the population means, in essence estimating the propensity score $\pi(s)$ rather than the inverse propensity score $\frac{1}{\pi(s)}$.
Therefore, while both approaches are estimators of a fully-interacted propensity score, the regularization in multilevel calibration controls an upper bound on the bias when estimating the population average $\mu$. In contrast, regularization in the multilevel GLM  provides a condition on a quantity that is incidental to estimating $\mu$.

%%%
%%% SIMULATION STUDY
%%%

\section{Simulation study calibrated to the 2016 U.S. presidential election}
\label{sec:sims}
We now evaluate the statistical behavior of the multilevel calibration and DRP estimators on simulated data based on our application to the 2016 United States Presidential election, described in Section \ref{sec:intro_application}.
We calibrate two non-response models to the response structure in this population. 
First, we fit a random forest model to predict response (i.e. inclusion in the Pew sample) with $B=500$ trees, so that the probability of responding in cell $s$ is
\[
  \pi^{\text{rf}}(s) = \sum_{s'} \frac{n_{s'}^\calR}{B}\sum_{b=1}^B \frac{\bbone\{s' \in L_b(s)\}}{|L_b(s)|}.
\] 
We also consider a fourth-order model, where the response probability for cell $s$ is
\[
  \pi^{(4)}(s) = \text{logit}^{-1}\left(\sum_{k=1}^4 \hat{\beta}^{(k)} \cdot D_s^{(k)} \right),
\] 
and the coefficient vector $\hat{\beta}$ is under-regularized so that there is poor overlap.
We similarly consider two different outcome models for presidential candidate vote choice. First, we fix the outcomes to be unchanged from the original data; second, we model the probability that unit $i$ votes Republican ($Y_i = 1$) as a fourth order logistic regression model as above, similarly under-regularized.

\begin{figure}[tb]
  \centering \includegraphics[width=.95\maxwidth]{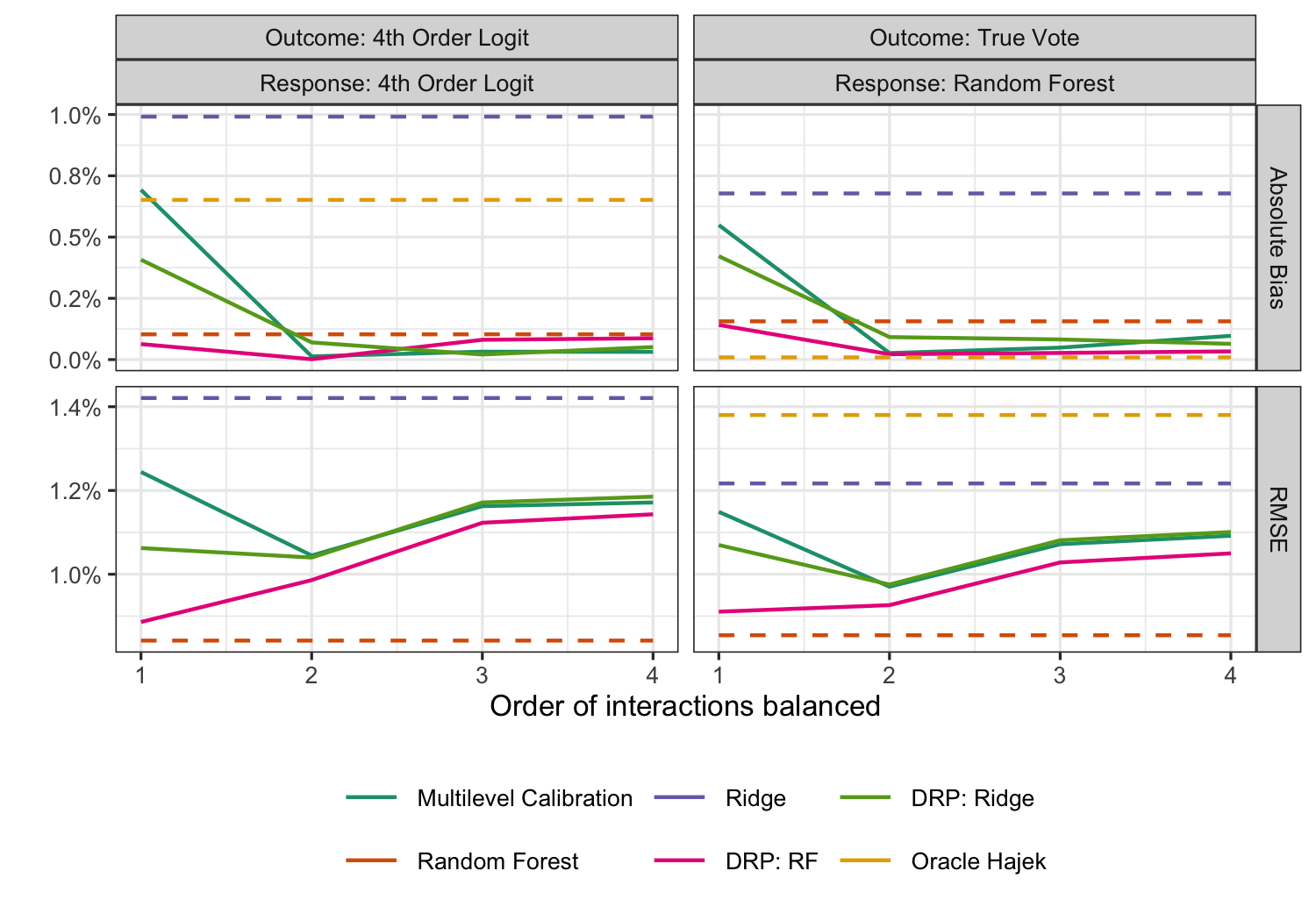}
  \caption{Bias and RMSE across 1000 simulation runs. RMSE for the oracle Horvitz-Thompson estimator in the under-regularized fourth order model (6.4\%) omitted for scale. } 
  \label{fig:bias_rmse}
\end{figure}

To generate simulation runs, we re-sample from the population with replacement, so the total number of units $N$ is fixed while the number of units within each cell $N_s^\calP$ varies. We then generate responses and outcomes according to the probabilities above using two pairs: (a) fourth order models for both the response and the outcome, and (b) a random forest response model with the true, deterministic outcomes. We consider using multilevel calibration weighting in Equation \eqref{eq:primal}, balancing first, second, third, and fourth order interactions with $\lambda^{(k)} = 1$ and setting $\lambda^{(k)} = 0$ for interactions of higher order. We also consider the DRP estimator, bias correcting with either a third-order ridge regression or a random forest, as well as MRP with these outcome models. Finally, we compare to the oracle Horvitz-Thompson estimator with the true response probabilities.

Figure \ref{fig:bias_rmse} shows the bias and root mean square error (RMSE) of these approaches across simulation runs. First looking at the bias, we see that under both data generating processes (DGPs) it is not enough to rake on margins, and there are substantial gains to balancing second and higher order interactions. Next, bias correction can provide large improvements: under both DGPs, DRP reduces the bias relative to raking on the margins alone by nearly the same degree as directly balancing higher order interaction terms. Even in the under-regularized fourth order DGP---where the oracle Horvitz-Thompson estimator performs poorly---we can significantly reduce the bias.
DRP also has reduced bias relative to MRP alone with the same outcome model.
Focusing on RMSE, we see that the decrease in bias from balancing higher order interactions outweighs the increase in variance only when balancing second order interactions, with third and fourth order interactions having a worse bias-variance trade-off. We see however, that the bias-variance trade-off for including an outcome model through DRP is favorable under both outcome models and DGPs, with the DRP estimator with a random forest outcome model and raking weights having the lowest RMSE. Finally, MRP with ridge regression has higher RMSE than multilevel calibration and DRP, while MRP with random forest (the oracle estimator for one of the DGPs) has lower RMSE.

% Case study: 
\section{2016 US Presidential election polls}
\label{sec:numerical}
We now turn to evaluating the proposed estimators in the context of 2016 US Presidential polling, as described in Section \ref{sec:intro_application}. 
We begin by showing balance gains from the multilevel calibration procedure and inspecting how bias correction through DRP affects both the point estimates and confidence intervals. We then evaluate the performance of multilevel calibration and DRP when predicting state-specific vote counts from the national pre-election survey of vote intention.

\begin{figure}[tb]
  \centering \includegraphics[width=.5\maxwidth]{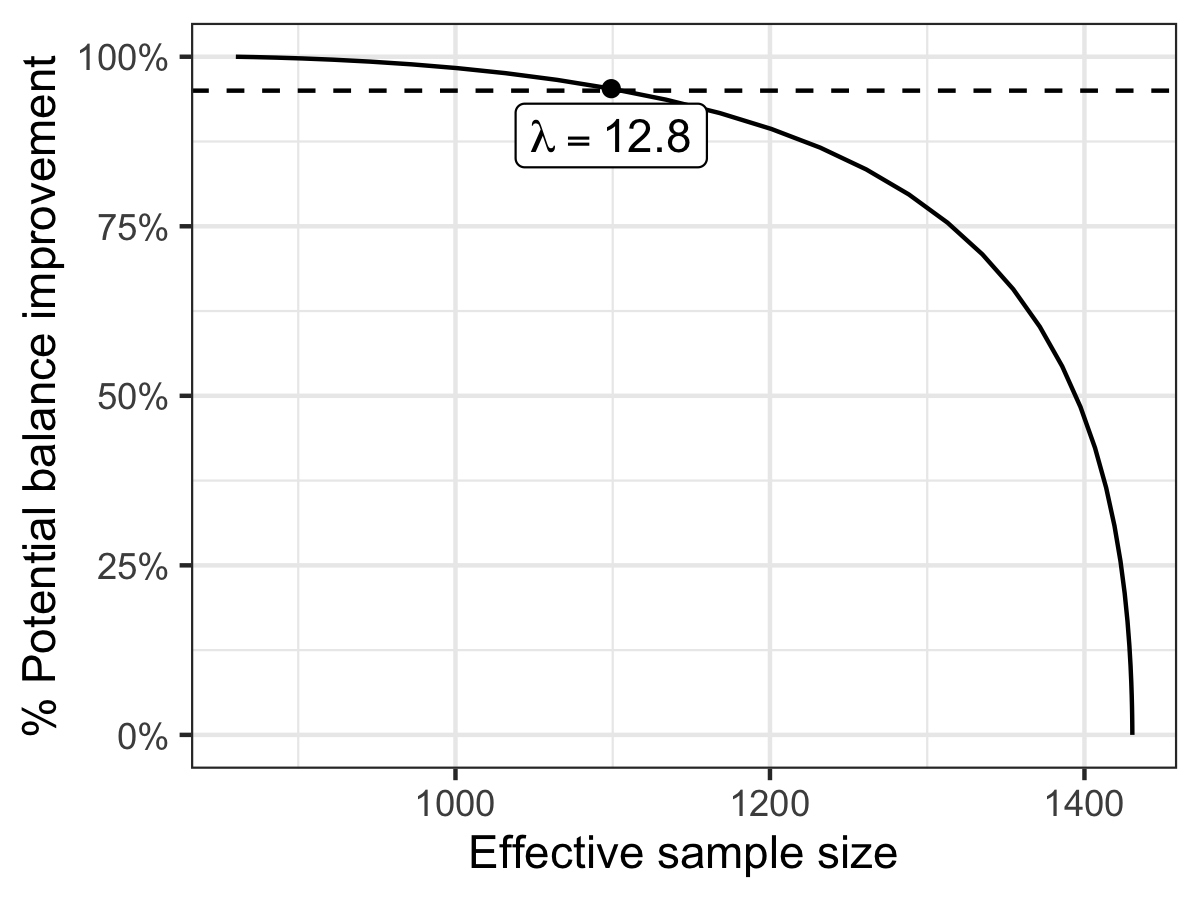}
  \caption{Difference between the re-weighted sample and the population, measured as the square root of the sum of squared imbalances for interactions $k=1,\ldots,6$, versus the effective sample size. Imbalance measures are scaled as the percent reduction in imbalance relative to raking on margins. } 
  \label{fig:balance_v_neff}
\end{figure}

We compute population cell counts $N_s^\calP$ from the post-2016 election CCES poll, limiting to those who voted in the election as indicated by a flag for a verified voter from the Secretaries of State files, and weighting according to provided CCES survey weights. We consider the balance of three different weighting estimators. First, we rake on margins for eight variables measured in both surveys, equivalent to solving \eqref{eq:primal} with $\lambda_k \to \infty$ for $k \geq 2$ and $L = 0, U = \infty$. Next, we balance up to 6\super{th} order interaction terms, setting a common hyper-parameter $\lambda_k=\lambda$ for $k=2,\ldots,6$ and $\lambda_k  \to \infty$ for $k = 7,8$. 
To select $\lambda$, we solve \eqref{eq:primal} for a series of potential values, tracing out the bias-variance trade-off in Figure \ref{fig:balance_v_neff}. We find that a value of $\lambda = 12.8$ achieves 95\% of the potential imbalance reduction  while having an effective sample size 30\% larger than the least-regularized solution.
Last, we create post-stratification weights. Due to the number of empty cells, we limit to post-stratifying on four variables, collapsed into coarser cells.\footnote{We collapse income and age to 3 levels, education to a binary indicator for greater than a high school degree, and race to a binary indicator for white.} We also consider bias-correcting the multilevel weights with DRP with (a) a fourth order ridge regression model and (b) gradient boosted trees, both tuned with cross validation.

\begin{figure}[tb]
  \centering \includegraphics[width=.95\maxwidth]{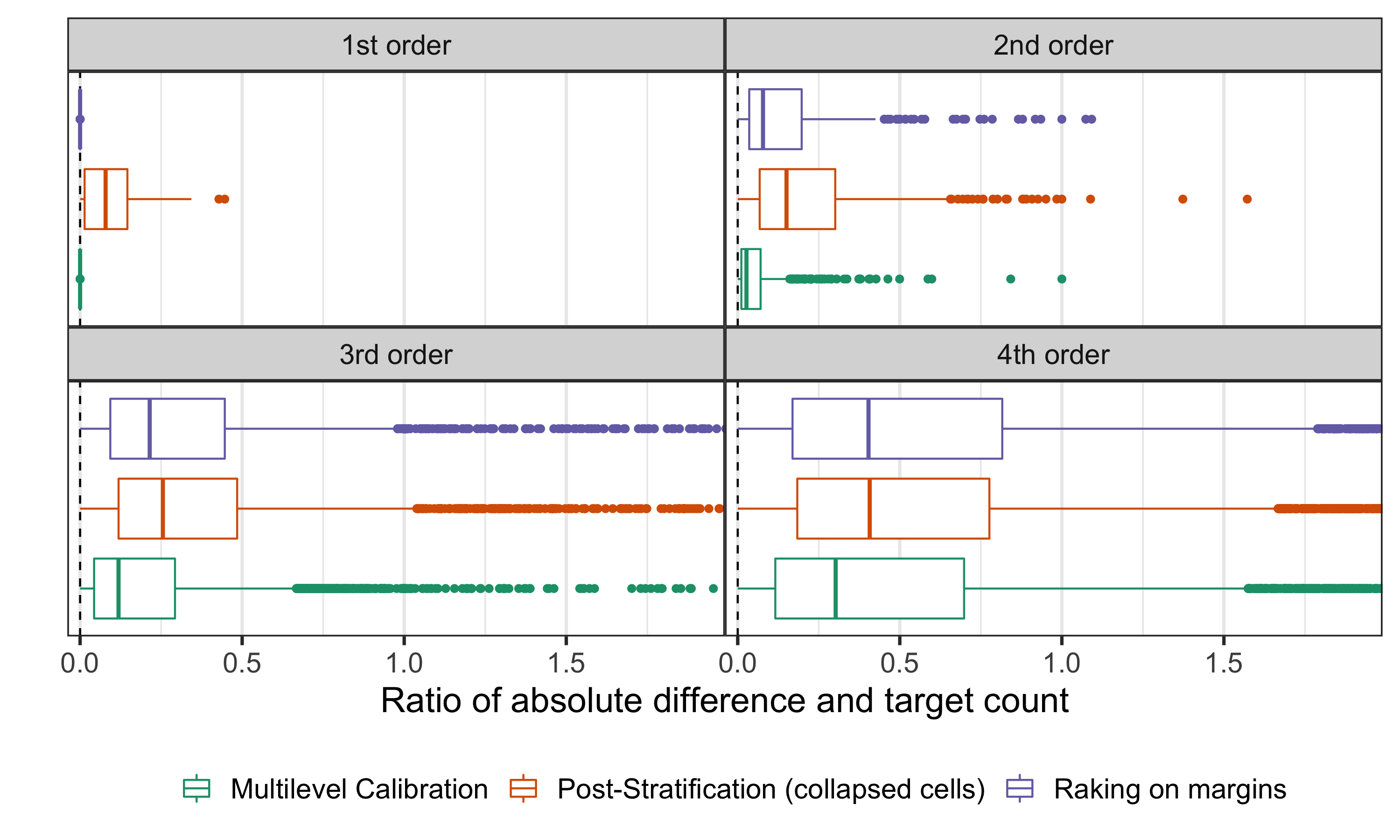}
  \caption{Covariate imbalance for interactions up to order 4, measured as the difference between the weighted and target count, divided by the target count.} 
  \label{fig:imbalance}
\end{figure}

Figure \ref{fig:imbalance} shows the imbalance when weighting by these three approaches for interactions up to order 4. To place the balance on the same scale, we divide the difference between the re-weighted sample and the population in the $j$\super{th} interaction of order $k$ by the population count, $\frac{\left|\sum_s D_{sj}^{(k)}(n_s^\calR \hat{\gamma}(s) - N^\calP)\right|}{\sum_s D_{sj}^{(k)} N_s^\calP}$. By design, both the raking and multilevel calibration weights exactly balance first order margins; however, post-stratifying on a limited set of collapsed cells does not guarantee balance on the margins of the uncollapsed cells, due to missing values. The multilevel calibration weights achieve significantly better balance on second order interactions than do the raking weights or the post-stratification weights. For higher order interactions these gains are still visible but less stark, as it becomes more difficult to achieve good balance.

This improvement in balance comes at some cost to variance. Figure \ref{fig:weight_hist} shows the empirical CDF of the respondent weights for the three approaches. The multilevel calibration weights that balance higher order interactions have a greater proportion of large weights, with a longer tail than raking or collapsed post-stratification. These large weights lead to a smaller effective sample size. The multilevel calibration weights yield an effective sample size of 1,099 for a design effect of 1.87, while raking and the collapsed post-stratification weights have effective sample sizes of 1,431 and 1,482 respectively.

\begin{figure}[tbp]
  \centering
    \begin{subfigure}[t]{0.5\textwidth}  
      \centering \includegraphics[width=.95\maxwidth]{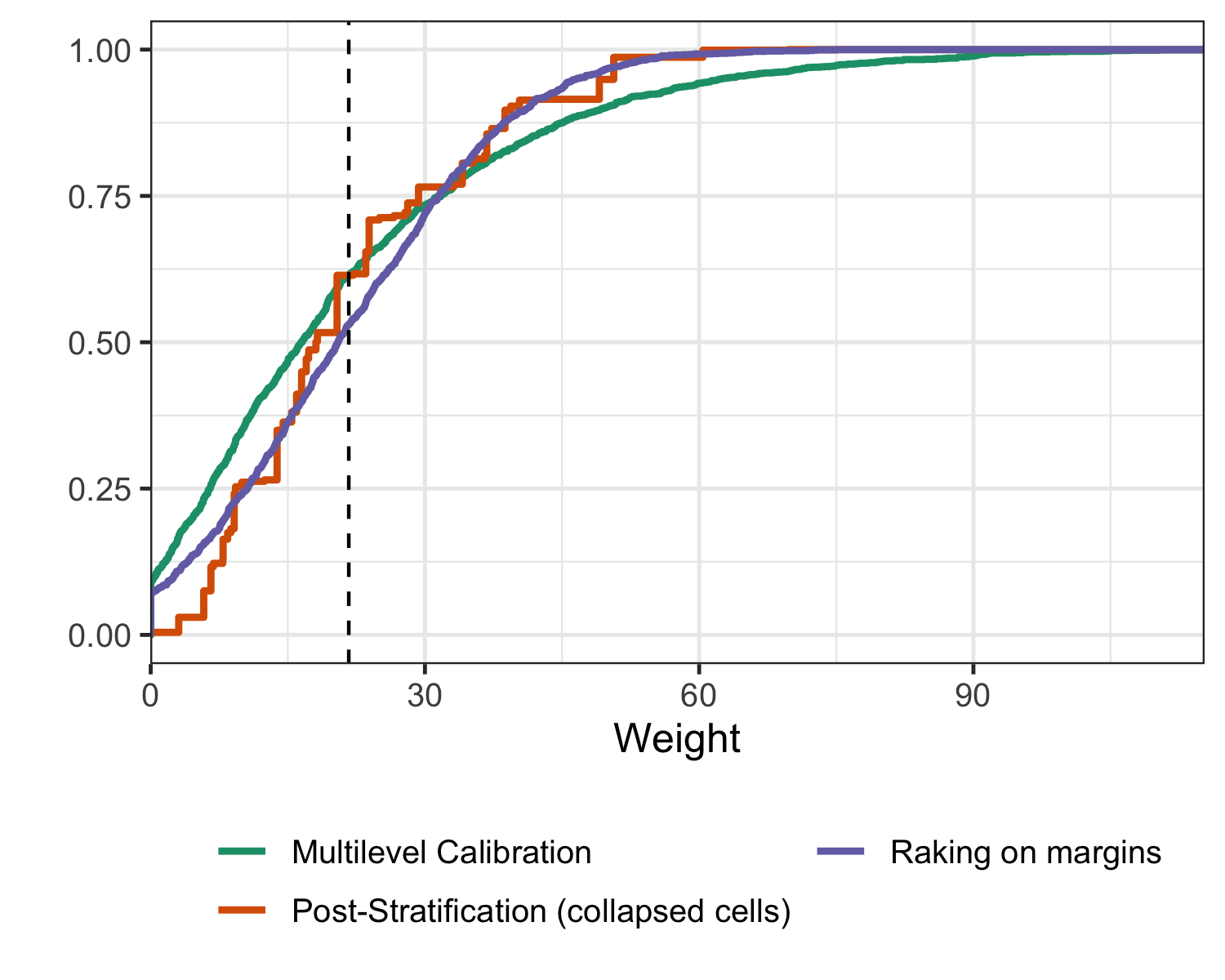}
      \caption{Empirical CDF of weights} 
      \label{fig:weight_hist}
    \end{subfigure}%
    ~
    \begin{subfigure}[t]{0.5\textwidth}  
      \centering \includegraphics[width=.95\maxwidth]{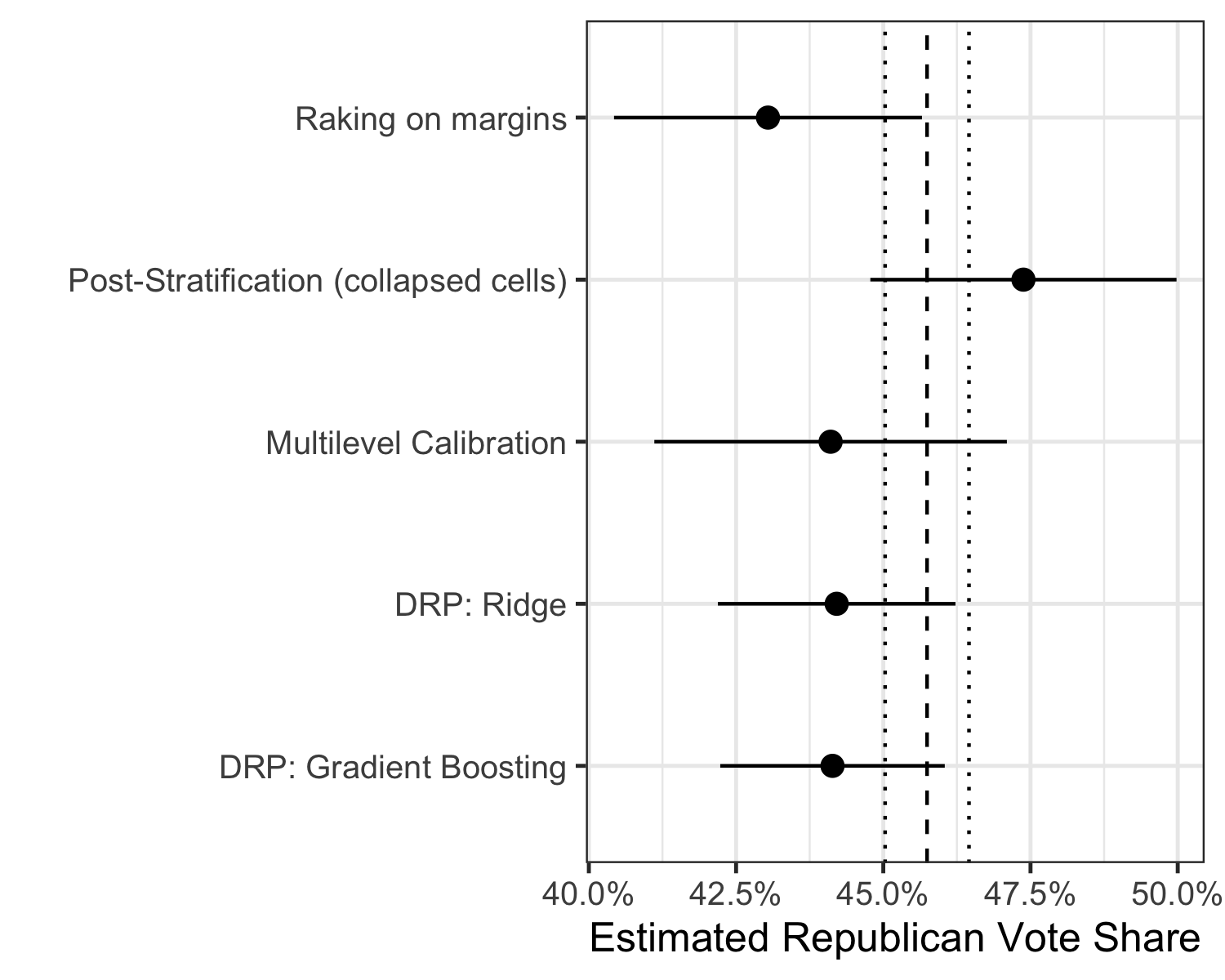}
      \caption{Predictions of Republican vote share} 
      \label{fig:res_plot}
      \end{subfigure}
\caption{(a) Distribution of weights. Dashed line indicates a uniform adjustment $\frac{N}{n}$. (b) Point estimates and approximate 95\% confidence intervals. Thick dashed line is the weighted CCES estimate, thinner dashed lines indicate the lower and upper 95\% confidence limits.}
\label{fig:res_plot_hist}
\end{figure}

Figure \ref{fig:res_plot} plots the point estimates and approximate 95\% confidence intervals for the multilevel calibration and DRP approaches, along with the estimated Republican vote share from the weighted CCES sample.
The different weights result in different predictions of the vote share, ranging from a point estimate of 42.5\% for raking to 47.5\% for post-stratification.
Additionally, the somewhat smaller effective sample size for multilevel calibration manifests itself in the standard error, leading to slightly larger confidence intervals.
The DRP estimators, bias correcting with either ridge regression or gradient boosted trees, have similar point estimates to multilevel calibration alone. This indicates that the remaining imbalances in higher order interactions after weighting in Figure \ref{fig:imbalance} do not lead to large estimated biases.
However, by including an outcome model the DRP estimators significantly reduce the standard errors.

\begin{figure}[tb]
  \centering \includegraphics[width=.95\maxwidth]{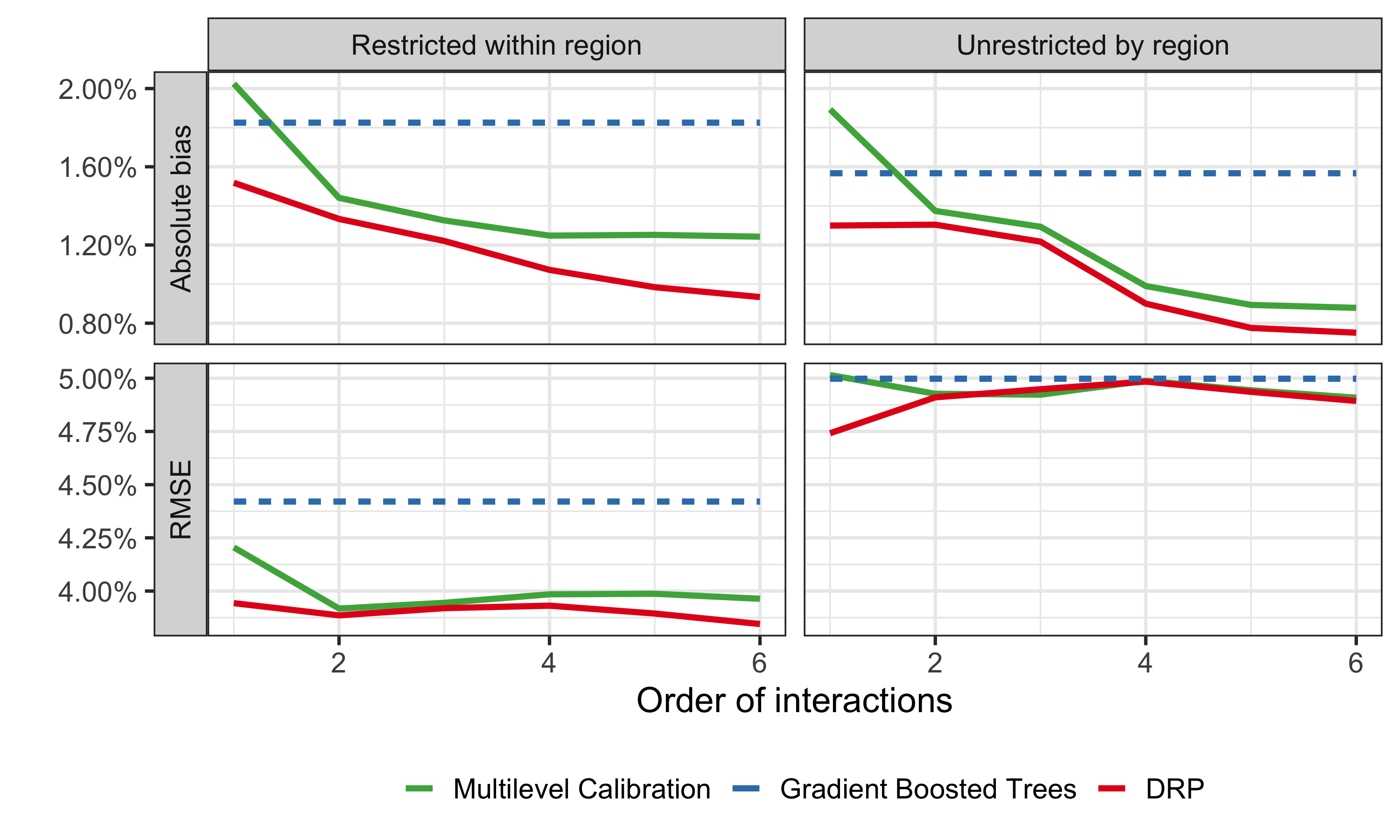}
  \caption{Absolute bias and MSE when imputing Republican vote share in 50 states from the national Pew survey, restricting to respondents in the same region and unrestricted by region.} 
  \label{fig:metrics}
\end{figure}

To empirically validate the role of balancing higher order interactions, we use the national pre-election Pew survey to predict Republican vote share within each state.
The pre-election survey was designed as a \emph{national} survey and so there are substantial differences between the sample and the state-level totals.
For each state we compute the population count vector $N^\calP$ from the weighted CCES, subset to the state of interest. Here we use a common $\lambda = 1$. We then impute the Republican vote share for that state via weighting alone and DRP with gradient boosted trees, balancing interactions up to order six. We consider both restricting the sample respondents to be in the same region as the state and including all sample respondents. 
Figure \ref{fig:metrics} shows the absolute bias and RMSE across the 50 states as the order increases from raking on first order margins to approximately balancing sixth order interactions. There are substantial gains to bias-correction through DRP when raking on the margins in terms of both bias and variance. Balancing higher order interactions also improves estimation over raking alone. And while the relative improvement of DRP over multilevel calibration diminishes as we balance higher order interactions, these gains are still apparent. Finally, while not restricting respondents by region results in lower bias across the 50 states, the higher RMSE shows that the estimates of state vote share are poor but averaging out.

%%%
%%% DISCUSSION
%%%
\section{Discussion}
\label{sec:discussion}

As recent public opinion polling has shown, differential non-response across groups defined by fine-grained higher order interactions of covariates can lead to substantial bias. While, ideally, we would address such nonresponse by post-stratifying on all interactions of important covariates simultaneously, the cost of collecting the necessary sample size is prohibitive, especially with low response rates. In practice, analysts circumvent this via ad hoc approaches, such as only adjusting for first-order marginal characteristics or collapsing cells together.

In this paper we provide two alternative approaches, \emph{multilevel calibration weighting} and \emph{Double Regression with Post-stratification (DRP)}, which provide principled ways to combine fine-grained calibration weighting and modern machine learning prediction techniques. 
The multilevel calibration weights improve on existing practice by approximately post-stratifying in a data-driven way, while at least ensuring exact raking on first order margins. DRP then takes advantage of flexible regression methods to further adjust for differences in fine-grained cells in a parsimonious way. 
For groups where the weights successfully adjust for differences in response rates, the DRP estimate is driven by the weights; 
for groups that remain over- or under-represented, DRP instead relies on a flexible regression model to estimate and adjust for remaining non-response bias.
Through theoretical, numerical, and simulation results, we find that these approaches can significantly improve estimation. Specifically, adjusting for higher-order interactions with multilevel calibration has much lower bias than ignoring them by only raking on the first-order margins. Incorporating flexible outcome estimators such as multilevel regression or tree-based approaches in our DRP estimator further improves upon weighting alone.

However, our proposal is certainly not a panacea, and important questions remain.
First, while we choose the value of the hyper-parameters by tracing out the bias-variance trade-off, it might be preferable to select them via data-adaptive measures. For example, \citet{Wang2019} propose a cross-validation style approach that takes advantage of the Lagrangian dual formulation. It may be possible to use such approaches in this setting.

Second, the key assumption is that outcomes are missing at random within cells. While we never expect this to be entirely true, it allows us to make progress on estimation, and with granular enough groups, we may hope that this assumption is approximately true. It is important then to characterize how our conclusions would change if this assumption is violated, and the response and the outcome are correlated even within cells. This form of \emph{sensitivity analysis} is common for matching estimators in observational studies, and has been proposed for inverse probability weighting \citep{Zhao2019_sens} and balancing estimators \citep{soriano_sens}. We leave adapting these approaches to this setting as future work. 

Third, with many higher order interactions it is difficult to find good information on population targets. We may have to combine various data sources collected in different manners, or impute unknown cells in the target population, and uncertainty in the population targets can also lead to increased variance \citep[see][for a recent review]{caughey_2020_elements}. 
Fourth, during the survey process we can obtain very detailed auxiliary information on survey respondents that we cannot obtain for the population, even marginally. Incorporating this sort of auxiliary information into the estimation procedure will be important to future work. 

Fifth, the asymptotic theory requires strong assumptions on the total number of cells.
It may be possible to weaken these assumptions by incorporating the outcome model into the analysis of the multilevel calibration weighting approach, using tools from Reproducing Kernel Hilbert Space theory \citep[see, e.g.][]{Hirshberg2019}.
Additionally, with categorical covariates, the non-response probability is completely determined by all of the interactions; continuous covariates will require stronger assumptions.

Finally, we propose principled procedures to account for non-response bias due to differences in response rates in groups defined by higher order interactions. While this is a pernicious problem, especially with lowered response rates, it is far from the only form of non-response bias, let alone the only difficulty with modern surveys. We therefore view multilevel calibration and DRP as only one part of the analyst's toolkit, supplementing design and data considerations.

%%%
%%% BIBLIOGRAPHY
%%%
\clearpage
\singlespacing
\bibliographystyle{chicago}
\bibliography{citations}

\begin{thebibliography}{}

\bibitem[\protect\citeauthoryear{Ansolabehere and Schaffner}{Ansolabehere and
  Schaffner}{2017}]{DVN/GDF6Z0_2017}
Ansolabehere, S. and B.~F. Schaffner (2017).
\newblock {CCES Common Content, 2016}.

\bibitem[\protect\citeauthoryear{Athey, Imbens, and Wager}{Athey
  et~al.}{2018}]{Athey2018_residual}
Athey, S., G.~W. Imbens, and S.~Wager (2018).
\newblock {Approximate residual balancing: debiased inference of average
  treatment effects in high dimensions}.
\newblock Technical report.

\bibitem[\protect\citeauthoryear{Athey, Tibshirani, and Wager}{Athey
  et~al.}{2019}]{Athey2019_grf}
Athey, S., J.~Tibshirani, and S.~Wager (2019).
\newblock {Generalized random forests}.
\newblock {\em Annals of Statistics\/}~{\em 47\/}(2), 1179--1203.

\bibitem[\protect\citeauthoryear{Ben-Michael, Feller, and
  Rothstein}{Ben-Michael et~al.}{2020a}]{benmichael2020_ascm}
Ben-Michael, E., A.~Feller, and J.~Rothstein (2020a).
\newblock {The Augmented Synthetic Control Method }.

\bibitem[\protect\citeauthoryear{Ben-Michael, Feller, and
  Rothstein}{Ben-Michael et~al.}{2020b}]{benmichael2020_lor}
Ben-Michael, E., A.~Feller, and J.~Rothstein (2020b).
\newblock {Variation in impacts of letters of recommendation on college
  admissions decisions: Approximate balancing weights for treatment effect
  heterogeneity in observational studies}.

\bibitem[\protect\citeauthoryear{Bisbee}{Bisbee}{2019}]{bisbee2019barp}
Bisbee, J. (2019).
\newblock Barp: Improving mister p using bayesian additive regression trees.
\newblock {\em American Political Science Review\/}~{\em 113\/}(4), 1060--1065.

\bibitem[\protect\citeauthoryear{Breidt and Opsomer}{Breidt and
  Opsomer}{2017}]{Breidt2017}
Breidt, F.~J. and J.~D. Opsomer (2017).
\newblock {Model-Assisted Survey Estimation with Modern Prediction Techniques}.
\newblock {\em Statistical Science\/}~{\em 32\/}(2), 190--205.

\bibitem[\protect\citeauthoryear{Breiman}{Breiman}{2001}]{Breiman2001}
Breiman, L. (2001).
\newblock {Random forests}.
\newblock {\em Machine Learning\/}~{\em 45}, 5--32.

\bibitem[\protect\citeauthoryear{Cassel, Sarndal, and Wretman}{Cassel
  et~al.}{1976}]{Cassel1976}
Cassel, C.~M., C.-E. Sarndal, and J.~H. Wretman (1976).
\newblock {Some results on generalized difference estimation and generalized
  regression estimation for finite populations}.
\newblock {\em Biometrika\/}~{\em 63\/}(3), 615--620.

\bibitem[\protect\citeauthoryear{Caughey, Berinsky, Chatfield, Hartman,
  Schickler, and Sekhon}{Caughey et~al.}{2020}]{caughey_2020_elements}
Caughey, D., A.~J. Berinsky, S.~Chatfield, E.~Hartman, E.~Schickler, and J.~S.
  Sekhon (2020).
\newblock {\em Target Estimation and Adjustment Weighting for Survey
  Nonresponse and Sampling Bias}.
\newblock Elements in Quantitative and Computational Methods for the Social
  Sciences. Cambridge University Press.

\bibitem[\protect\citeauthoryear{Caughey and Hartman}{Caughey and
  Hartman}{2017}]{caughey2017target}
Caughey, D. and E.~Hartman (2017).
\newblock Target selection as variable selection: Using the lasso to select
  auxiliary vectors for the construction of survey weights.
\newblock {\em Available at SSRN 3494436\/}.

\bibitem[\protect\citeauthoryear{Chattopadhyay, {Christopher H. Hase}, and
  Zubizarreta}{Chattopadhyay et~al.}{2020}]{Chattopadhyay2019}
Chattopadhyay, A., {Christopher H. Hase}, and J.~R. Zubizarreta (2020).
\newblock {Balancing Versus Modeling Approaches to Weighting in Practice}.
\newblock {\em Statistics in Medicine\/}~{\em in press}.

\bibitem[\protect\citeauthoryear{Chen, Valliant, and Elliott}{Chen
  et~al.}{2019}]{chen2019calibrating}
Chen, J. K.~T., R.~L. Valliant, and M.~R. Elliott (2019).
\newblock Calibrating non-probability surveys to estimated control totals using
  lasso, with an application to political polling.
\newblock {\em Journal of the Royal Statistical Society: Series C (Applied
  Statistics)\/}~{\em 68\/}(3), 657--681.

\bibitem[\protect\citeauthoryear{Chen, Li, and Wu}{Chen
  et~al.}{2020}]{chen2020dr_surveys}
Chen, Y., P.~Li, and C.~Wu (2020).
\newblock Doubly robust inference with nonprobability survey samples.
\newblock {\em Journal of the American Statistical Association\/}~{\em
  115\/}(532), 2011--2021.

\bibitem[\protect\citeauthoryear{Chipman, George, and McCulloch}{Chipman
  et~al.}{2010}]{Chipman2010}
Chipman, H.~A., E.~I. George, and R.~E. McCulloch (2010).
\newblock {BART: Bayesian additive regression trees}.
\newblock {\em Annals of Applied Statistics\/}~{\em 6\/}(1), 266--298.

\bibitem[\protect\citeauthoryear{Deming and Stephan}{Deming and
  Stephan}{1940}]{Deming1940}
Deming, W.~E. and F.~F. Stephan (1940).
\newblock {On a Least Squares Adjustment of a Sampled Frequency Table When the
  Expected Marginal Totals are Known}.
\newblock {\em The Annals of Mathematical Statistics\/}~{\em 11\/}(4),
  427--444.

\bibitem[\protect\citeauthoryear{Deville and S{\"{a}}rndal}{Deville and
  S{\"{a}}rndal}{1992}]{Deville1992}
Deville, J.~C. and C.~E. S{\"{a}}rndal (1992).
\newblock {Calibration estimators in survey sampling}.
\newblock {\em Journal of the American Statistical Association\/}~{\em
  87\/}(418), 376--382.

\bibitem[\protect\citeauthoryear{Deville, S{\"{a}}rndal, and Sautory}{Deville
  et~al.}{1993}]{Deville1993}
Deville, J.~C., C.~E. S{\"{a}}rndal, and O.~Sautory (1993).
\newblock {Generalized raking procedures in survey sampling}.
\newblock {\em Journal of the American Statistical Association\/}~{\em
  88\/}(423), 1013--1020.

\bibitem[\protect\citeauthoryear{D’Amour, Ding, Feller, Lei, and
  Sekhon}{D’Amour et~al.}{2020}]{damour2020overlap}
D’Amour, A., P.~Ding, A.~Feller, L.~Lei, and J.~Sekhon (2020).
\newblock Overlap in observational studies with high-dimensional covariates.
\newblock {\em Journal of Econometrics\/}.

\bibitem[\protect\citeauthoryear{Friedman}{Friedman}{2001}]{Friedman2001}
Friedman, J. (2001).
\newblock {Greedy Function Approximation : A Gradient Boosting Machine}.
\newblock {\em The Annals of Statistics\/}~{\em 29\/}(5), 1189--1232.

\bibitem[\protect\citeauthoryear{Gao, Kennedy, Simpson, Gelman, et~al.}{Gao
  et~al.}{2020}]{gao2020improving}
Gao, Y., L.~Kennedy, D.~Simpson, A.~Gelman, et~al. (2020).
\newblock Improving multilevel regression and poststratification with
  structured priors.
\newblock {\em Bayesian Analysis\/}.

\bibitem[\protect\citeauthoryear{Gelman and Little}{Gelman and
  Little}{1997}]{Gelman1997}
Gelman, A. and T.~C. Little (1997).
\newblock {Poststratification Into Many Categories Using Hierarchical Logistic
  Regression}.
\newblock {\em Survey Methodology\/}~{\em 23\/}(2), 127--135.

\bibitem[\protect\citeauthoryear{Ghitza and Gelman}{Ghitza and
  Gelman}{2013}]{Ghitza2013}
Ghitza, Y. and A.~Gelman (2013).
\newblock {Deep interactions with MRP: Election turnout and voting patterns
  among small electoral subgroups}.
\newblock {\em American Journal of Political Science\/}~{\em 57\/}(3),
  762--776.

\bibitem[\protect\citeauthoryear{Guggemos and Till{\'{e}}}{Guggemos and
  Till{\'{e}}}{2010}]{Guggemos2010}
Guggemos, F. and Y.~Till{\'{e}} (2010).
\newblock {Penalized calibration in survey sampling: Design-based estimation
  assisted by mixed models}.
\newblock {\em Journal of Statistical Planning and Inference\/}~{\em
  140\/}(11), 3199--3212.

\bibitem[\protect\citeauthoryear{Hartman, Hazlett, and Sterbenz}{Hartman
  et~al.}{2021}]{hartman2021kpop}
Hartman, E., C.~Hazlett, and C.~Sterbenz (2021).
\newblock A kernel balancing approach for reducing specification assumptions in
  survey weighting.
\newblock {\em Unpublished Manuscript\/}.

\bibitem[\protect\citeauthoryear{Hirshberg and Wager}{Hirshberg and
  Wager}{2019}]{Hirshberg2019_augment}
Hirshberg, D. and S.~Wager (2019).
\newblock {Augmented Minimax Linear Estimation}.

\bibitem[\protect\citeauthoryear{Hirshberg, Maleki, and Zubizarreta}{Hirshberg
  et~al.}{2019}]{Hirshberg2019}
Hirshberg, D.~A., A.~Maleki, and J.~Zubizarreta (2019).
\newblock {Minimax Linear Estimation of the Retargeted Mean}.

\bibitem[\protect\citeauthoryear{Horvitz and Thompson}{Horvitz and
  Thompson}{1952}]{Horvitz1952}
Horvitz, D.~G. and D.~Thompson (1952).
\newblock {A Generalization of Sampling Without Replacement From a Finite
  Universe Author ( s ): D . G . Horvitz and D . J . Thompson Published by :
  American Statistical Association Stable URL :
  http://www.jstor.org/stable/2280784}.
\newblock {\em Journal of the American Statistical Association\/}~{\em
  44\/}(260), 663--685.

\bibitem[\protect\citeauthoryear{Huang and Fuller}{Huang and
  Fuller}{1978}]{Huang1978}
Huang, E.~T. and W.~A. Fuller (1978).
\newblock {Nonnegative regression estimation in sample survey data}.
\newblock In {\em Proceedings of the Section on Survey Research Methods}, pp.\
  300--305.

\bibitem[\protect\citeauthoryear{Kennedy, Blumenthal, Clement, Clinton, Durand,
  Franklin, McGeeney, Miringoff, Olson, Rivers, Saad, Witt, and
  Wlezien}{Kennedy et~al.}{2018}]{Kennedy2018}
Kennedy, C., M.~Blumenthal, S.~Clement, J.~D. Clinton, C.~Durand, C.~Franklin,
  K.~McGeeney, L.~Miringoff, K.~Olson, D.~Rivers, L.~Saad, G.~E. Witt, and
  C.~Wlezien (2018).
\newblock {An Evaluation of the 2016 Election Polls in the United States}.
\newblock {\em Public Opinion Quarterly\/}~{\em 82\/}(1), 1--33.

\bibitem[\protect\citeauthoryear{Kennedy and Hartig}{Kennedy and
  Hartig}{2019}]{Kennedy2019}
Kennedy, C. and H.~Hartig (2019).
\newblock {Response rates in telephone surveys have resumed their decline}.

\bibitem[\protect\citeauthoryear{Linzer}{Linzer}{2011}]{linzer2011reliable}
Linzer, D.~A. (2011).
\newblock Reliable inference in highly stratified contingency tables: Using
  latent class models as density estimators.
\newblock {\em Political Analysis\/}, 173--187.

\bibitem[\protect\citeauthoryear{Little and Wu}{Little and
  Wu}{1991}]{Little1991}
Little, R.~J. and M.~M. Wu (1991).
\newblock {Models for contingency tables with known margins when target and
  sampled populations differ}.
\newblock {\em Journal of the American Statistical Association\/}~{\em
  86\/}(413), 87--95.

\bibitem[\protect\citeauthoryear{McConville, Breidt, Lee, and
  Moisen}{McConville et~al.}{2017}]{mcconville2017model}
McConville, K.~S., F.~J. Breidt, T.~C. Lee, and G.~G. Moisen (2017).
\newblock Model-assisted survey regression estimation with the lasso.
\newblock {\em Journal of Survey Statistics and Methodology\/}~{\em 5\/}(2),
  131--158.

\bibitem[\protect\citeauthoryear{Mercer, Lau, and Kennedy}{Mercer
  et~al.}{2018}]{mercer2018weighting}
Mercer, A., A.~Lau, and C.~Kennedy (2018).
\newblock For weighting online opt-in samples, what matters most.
\newblock {\em Pew Research Center\/}.

\bibitem[\protect\citeauthoryear{Montgomery and Olivella}{Montgomery and
  Olivella}{2018}]{Montgomery2018}
Montgomery, J.~M. and S.~Olivella (2018).
\newblock {Tree-Based Models for Political Science Data}.
\newblock {\em American Journal of Political Science\/}~{\em 62\/}(3),
  729--744.

\bibitem[\protect\citeauthoryear{Ning, Sida, and Imai}{Ning
  et~al.}{2020}]{Ning2020_cbps}
Ning, Y., P.~Sida, and K.~Imai (2020).
\newblock {Robust estimation of causal effects via a high-dimensional covariate
  balancing propensity score}.
\newblock {\em Biometrika\/}~{\em 107\/}(3), 533--554.

\bibitem[\protect\citeauthoryear{Park and Fuller}{Park and
  Fuller}{2009}]{Park2009}
Park, M. and W.~A. Fuller (2009).
\newblock {The mixed model for survey regression estimation}.
\newblock {\em Journal of Statistical Planning and Inference\/}~{\em 139\/}(4),
  1320--1331.

\bibitem[\protect\citeauthoryear{{Pew Research Center}}{{Pew Research
  Center}}{2016}]{pew2016}
{Pew Research Center} (2016, 10).
\newblock {As Election Nears, Voters Divided Over Democracy and `Respect'}.
\newblock Technical report, {Pew Research Center}.

\bibitem[\protect\citeauthoryear{Rao and Singh}{Rao and Singh}{1997}]{Rao1997}
Rao, J. N.~K. and A.~C. Singh (1997).
\newblock {A ridge-shrinkage method for range-restricted weight calibration in
  survey sampling}.
\newblock In {\em SA Proceedings of the Section on Survey Research Methods},
  pp.\  57--85.

\bibitem[\protect\citeauthoryear{Robins, Rotnitzky, {Ping Zhao}, and {Ping
  ZHAO}}{Robins et~al.}{1994}]{Robins1994}
Robins, J.~M., A.~Rotnitzky, L.~{Ping Zhao}, and L.~{Ping ZHAO} (1994).
\newblock {Estimation of Regression Coefficients When Some Regressors are not
  Always Observed}.
\newblock {\em Journal of the American Statistical Association\/}~{\em 89427},
  846--866.

\bibitem[\protect\citeauthoryear{Rubin}{Rubin}{1976}]{Rubin1976}
Rubin, D.~B. (1976).
\newblock {Inference and Missing Data}.
\newblock {\em Biometrika\/}~{\em 63\/}(3), 581--592.

\bibitem[\protect\citeauthoryear{Si, Trangucci, Gabry, and Gelman}{Si
  et~al.}{2020}]{si2020bayesian}
Si, Y., R.~Trangucci, J.~S. Gabry, and A.~Gelman (2020).
\newblock Bayesian hierarchical weighting adjustment and survey inference.
\newblock {\em arXiv preprint arXiv:1707.08220\/}.

\bibitem[\protect\citeauthoryear{Soriano, Ben-Michael, Bickel, Feller, and
  Pimentel}{Soriano et~al.}{2021}]{soriano_sens}
Soriano, D., E.~Ben-Michael, P.~Bickel, A.~Feller, and S.~Pimentel (2021).
\newblock Sensitivity analysis for balancing weights.
\newblock Technical report.
\newblock working paper.

\bibitem[\protect\citeauthoryear{Tan}{Tan}{2020a}]{Tan2020_model_assisted}
Tan, Z. (2020a).
\newblock {Model-Assisted Inference for Treatment Effects Using}.
\newblock {\em Annals of Statistics\/}~{\em 48\/}(2), 811--837.

\bibitem[\protect\citeauthoryear{Tan}{Tan}{2020b}]{Tan2020_calibrated}
Tan, Z. (2020b).
\newblock {Regularized calibrated estimation of propensity scores with model
  misspecification and high-dimensional data}.
\newblock {\em Biometrika\/}~{\em 107\/}(1), 137--158.

\bibitem[\protect\citeauthoryear{Wainwright}{Wainwright}{2019}]{wainwright_2019}
Wainwright, M.~J. (2019).
\newblock {\em High-Dimensional Statistics: A Non-Asymptotic Viewpoint}.
\newblock Cambridge Series in Statistical and Probabilistic Mathematics.
  Cambridge University Press.

\bibitem[\protect\citeauthoryear{Wang and Zubizarreta}{Wang and
  Zubizarreta}{2020}]{Wang2019}
Wang, Y. and J.~R. Zubizarreta (2020).
\newblock {Minimal dispersion approximately balancing weights: Asymptotic
  properties and practical considerations}.
\newblock {\em Biometrika\/}~{\em 107\/}(1), 93--105.

\bibitem[\protect\citeauthoryear{Wong, Chuen, and Chan}{Wong
  et~al.}{2018}]{Wong2018}
Wong, R. K.~W., K.~Chuen, and G.~Chan (2018).
\newblock {Kernel-based covariate functional balancing for observational
  studies}.
\newblock {\em Biometrika\/}~{\em 105\/}(1), 199--213.

\bibitem[\protect\citeauthoryear{Yang, Kim, and Song}{Yang
  et~al.}{2020}]{yang2020doubly}
Yang, S., J.~K. Kim, and R.~Song (2020).
\newblock Doubly robust inference when combining probability and
  non-probability samples with high dimensional data.
\newblock {\em Journal of the Royal Statistical Society: Series B (Statistical
  Methodology)\/}~{\em 82\/}(2), 445--465.

\bibitem[\protect\citeauthoryear{Zhao}{Zhao}{2019}]{Zhao2019}
Zhao, Q. (2019).
\newblock {Covariate balancing propensity score by tailored loss functions}.
\newblock {\em Annals of Statistics\/}~{\em 47\/}(2), 965--993.

\bibitem[\protect\citeauthoryear{Zhao, Small, and Bhattacharya}{Zhao
  et~al.}{2019}]{Zhao2019_sens}
Zhao, Q., D.~S. Small, and B.~B. Bhattacharya (2019).
\newblock {Sensitivity analysis for inverse probability weighting estimators
  via the percentile bootstrap}.
\newblock {\em Journal of the Royal Statistical Society. Series B: Statistical
  Methodology\/}~{\em 81\/}(4), 735--761.

\bibitem[\protect\citeauthoryear{Zubizarreta}{Zubizarreta}{2015}]{Zubizarreta2015}
Zubizarreta, J.~R. (2015).
\newblock {Stable Weights that Balance Covariates for Estimation With
  Incomplete Outcome Data}.
\newblock {\em Journal of the American Statistical Association\/}~{\em
  110\/}(511), 910--922.

\end{thebibliography}

%%% APPENDIX
%%%

\clearpage
\appendix

\renewcommand\thefigure{\thesection.\arabic{figure}}    
\setcounter{figure}{0}  
\renewcommand\theassumption{A.\arabic{assumption}}    
\setcounter{assumption}{0}

\clearpage
\section{Proofs and derivations}
\label{sec:proofs}

\begin{assumption}
\label{a:regularity}
    There is a sequence of populations of size $N$ with $N \to \infty$ such that
  \begin{enumerate}[label = (\alph*),ref={\theassumption\alph*}]
    \item \label{a:cell_rate} The condition number of $D$, $\kappa \equiv \|D^{-1}\|_2\|D\|_2$, and the number of cells $J$ satisfy $\frac{\kappa^2J}{(\pi^\ast N)^\alpha} \to c$ for some constant $c$ and for an $0 \leq \alpha < 1$
    \item The response variables $R_i$ are independent.
    \item $\pi(s) \geq \pi^\ast > 0$ for all $N$, where $\frac{1}{{\pi^\ast}^2 N} \to 0$ as $N \to \infty$.
    \item \label{a:finite_second_moment} The residuals $\varepsilon_i \equiv Y_i - \mu_{S_i}$ satisfy $\frac{1}{N}\sum_{i=1}^N\varepsilon_i^2 < \infty$ for all population sizes $N$.
    \item \label{a:cell_var} The maximum variance across cells conditional on the cell counts, $\sigma^2 \equiv \max_s \sigma_s^2 = \max_s \Var\left(\bar{\varepsilon}_s \mid n_s^\calR \right)$ is $o_p\left(\left(\pi^\ast N\right)^{-\frac{\alpha}{2}}\right)$.
    \item \label{a:not_super_eff} For a random variable $Z = o_p\left(\frac{1}{{\pi^\ast}\sqrt{N}}\right)$, the variance of the oracle estimator $V = \frac{1}{N^2}\sum_i\frac{\pi_i (1-\pi_i)}{\pi(S_i)}\varepsilon_i^2$ satisfies $\frac{Z}{\sqrt{V}} = o_p(1)$.
    \item We find $\hat{\gamma}$ via the modified problem
    \begin{equation}
    \label{eq:primal_unreg}
    \begin{aligned}
    \min_{\gamma \in \R^J} \;\; & \sum_{k=1}^d \left\|\sum_s D_s^{(k)} n_s^\calR \gamma(s) - D_s^{(k)} N_s^\calP \right\|_2^2\\
  \text{subject to} \;\; & 0 \leq \gamma(s) \leq 1\;\;\; \forall s=1,\ldots J.
  \end{aligned}
\end{equation}
  \end{enumerate}
\end{assumption}

\begin{lemma}
\label{lem:balance_bound}
    Let $\kappa \equiv \|D^{-1}\|_2\|D\|_2$ be the ratio of the maximum and minimum singular values of $D$. The solution to \eqref{eq:primal_unreg} satisfies
    
    \[
    \sqrt{\sum_{s}\left(n_s^\calR \hat{\gamma}(s) - N_s^\calP\right)^2} \leq \kappa \sqrt{\sum_s \left(\frac{n_s^\calR}{\pi(s)} - N_s^\calP\right)^2}
    \]
\end{lemma}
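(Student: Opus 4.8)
The plan is to recognize that the objective in \eqref{eq:primal_unreg} is exactly a squared Euclidean norm of a linearly transformed imbalance vector, and then to combine the invertibility of the square matrix $D$ with a comparison against the oracle inverse-propensity weights. First I would vectorize. Writing $m(\gamma) \in \R^J$ for the cell-count imbalance vector with entries $m(\gamma)_s = n_s^\calR \gamma(s) - N_s^\calP$, the inner sum $\sum_s D_s^{(k)}\left(n_s^\calR \gamma(s) - N_s^\calP\right)$ is precisely $(D^{(k)})' m(\gamma)$. Stacking over $k = 1,\ldots,d$ and using $D = [D^{(1)},\ldots,D^{(d)}]$, the objective becomes $\sum_{k=1}^d \|(D^{(k)})' m(\gamma)\|_2^2 = \|D' m(\gamma)\|_2^2$, while the quantity to be bounded on the left-hand side of the lemma is simply $\|m(\hat{\gamma})\|_2$.

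Second, I would introduce the condition number through two singular-value inequalities. Since $D$ is a square invertible matrix (invertibility is exactly what makes $\kappa$ well defined), for any vector $v$ we have $\|v\|_2 = \|(D')^{-1} D' v\|_2 \leq \|(D')^{-1}\|_2 \, \|D' v\|_2 = \|D^{-1}\|_2 \, \|D' v\|_2$, using that the spectral norm is invariant under transposition. Applying this with $v = m(\hat{\gamma})$ shows that $\|m(\hat{\gamma})\|_2$ is controlled by $\|D^{-1}\|_2$ times the square root of the optimal objective value $\|D' m(\hat{\gamma})\|_2$.

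Third, I would bound the optimal objective value by evaluating the objective at a convenient feasible comparison point, namely the oracle inverse-propensity weights $\gamma^\ast(s) = 1/\pi(s)$, for which $m(\gamma^\ast)_s = n_s^\calR/\pi(s) - N_s^\calP$ reproduces exactly the vector appearing inside the right-hand-side norm. By optimality of $\hat{\gamma}$ we get $\|D' m(\hat{\gamma})\|_2 \leq \|D' m(\gamma^\ast)\|_2 \leq \|D\|_2 \, \|m(\gamma^\ast)\|_2$, where the last step is the operator-norm bound $\|D' w\|_2 \leq \|D\|_2 \|w\|_2$. Chaining the three displays gives $\|m(\hat{\gamma})\|_2 \leq \|D^{-1}\|_2 \|D\|_2 \, \|m(\gamma^\ast)\|_2 = \kappa \sqrt{\sum_s \left(n_s^\calR/\pi(s) - N_s^\calP\right)^2}$, which is the claim.

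The main obstacle I anticipate is the comparison-point step: the entire argument rests on being able to upper-bound the optimal objective by the objective at $\gamma^\ast$, so I must check that the oracle weights $\gamma^\ast(s) = 1/\pi(s)$ (or, should the box constraint $0 \leq \gamma(s) \leq 1$ bind, an appropriately rescaled feasible surrogate that still reproduces the right-hand-side norm up to constants) actually lie in the feasible region of \eqref{eq:primal_unreg}. Everything else---the vectorization, the transpose-invariance of the spectral norm, and the two operator-norm inequalities---is routine once the objective is recognized as $\|D' m(\gamma)\|_2^2$ and $D$ is taken to be invertible.
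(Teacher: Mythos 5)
Your proposal is correct and is essentially identical to the paper's own proof: the paper likewise rewrites the objective of \eqref{eq:primal_unreg} as $\|D'(\diag(n^\calR)\gamma - N^\calP)\|_2^2$, uses the feasible comparison point $\gamma^\ast = 1/\pi$ together with optimality of $\hat{\gamma}$, and sandwiches with the minimum and maximum singular values of $D$ to produce the factor $\kappa$. The one caveat you flag---whether $1/\pi(s)$ actually satisfies the box constraint $0 \leq \gamma(s) \leq 1$---is exactly the point the paper disposes of by bare assertion (it declares $\frac{1}{\pi} \in (0,1)^J$ feasible, which sits uneasily with $\pi(s) \leq 1$ forcing $1/\pi(s) \geq 1$), so your caution there is warranted rather than a gap in your argument relative to the paper's.
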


\begin{proof}[Proof of Lemma \ref{lem:balance_bound}]
Slightly abusing notation, denote $\frac{1}{\pi} \in (0,1)^J$ as the vector of inverse response probabilities for each cell. $\frac{1}{\pi}$ is feasible for optimization problem \eqref{eq:primal_unreg}, and so 
\[
\begin{aligned}
  \frac{1}{\|D^{-1}\|_2} \|\diag(n^\calR)\hat{\gamma} - N^\calP\|_2 &\leq \|D'(\diag(n^\calR)\hat{\gamma} - N^\calP)\|_2\\
  & \leq \left\|D'\left(\diag(n^\calR) \frac{1}{\pi} - N^\calP\right)\right\|_2\\
  &\leq \|D\|_2\left\|\diag(n^\calR) \frac{1}{\pi} - N^\calP\right\|_2
\end{aligned}
\]
Multiplying by $\|D^{-1}\|_2$ gives the result.
\end{proof}

\begin{lemma}
\label{lem:pscore_balance}
Let $\pi^\ast = \min_s \pi(s)$. For any $\delta > 0$,

\[
\frac{1}{N}\sqrt{\sum_s \left(N_s^\calP - \frac{n_s^\calR}{\pi(s)}\right)^2} \leq \frac{1}{\pi^\ast \sqrt{N}}\left(\sqrt{J \log 5} + \delta\right),
\]
with probability at least $1 - \exp\left(-2{\pi^\ast}^2 N \delta^2\right)$.

\end{lemma}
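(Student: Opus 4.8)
The plan is to control the normalized Euclidean norm $\frac{1}{N}\left\|N^\calP - \frac{n^\calR}{\pi}\right\|_2$ by splitting it into two pieces: a concentration-around-the-mean step, which supplies the failure probability $\exp(-2{\pi^\ast}^2 N \delta^2)$, and a bound on the mean itself, which supplies the $\sqrt{J \log 5}$ term via a covering-number argument. To set up, write $Z_s \equiv N_s^\calP - \frac{n_s^\calR}{\pi(s)}$. Since $n_s^\calR = \sum_{S_i = s} R_i$ and, by the definition of $\pi(s)$, $\E[n_s^\calR] = N_s^\calP \pi(s)$, each coordinate is mean zero: $\E[Z_s] = 0$. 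Crucially, by Assumption \ref{a:regularity}(b) the $R_i$ are independent, so $h(R_1,\ldots,R_N) \equiv \frac{1}{N}\sqrt{\sum_s Z_s^2}$ is a function of independent inputs, exactly the structure a bounded-differences argument needs.

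For the concentration piece I would apply McDiarmid's inequality to $h$. Flipping a single $R_i$ (with $S_i = s$) changes only the coordinate $Z_s$, and by exactly $\frac{1}{\pi(s)} \le \frac{1}{\pi^\ast}$; since altering one coordinate of a vector changes its $\ell_2$ norm by at most the magnitude of that change, $h$ changes by at most $\frac{1}{N \pi^\ast}$. Hence the bounded-differences constants satisfy $\sum_{i=1}^N c_i^2 \le \frac{1}{N {\pi^\ast}^2}$, and McDiarmid gives
\[
\bbP\left(h \ge \E[h] + \delta\right) \le \exp\left(-\frac{2\delta^2}{\sum_i c_i^2}\right) \le \exp\left(-2 {\pi^\ast}^2 N \delta^2\right),
\]
which is precisely the failure probability claimed.

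It remains to bound $\E[h] = \frac{1}{N}\E\left[\sup_{\|u\|_2 \le 1}\langle u, Z\rangle\right]$, and this is where the $\sqrt{J \log 5}$ enters. I would discretize the unit sphere $S^{J-1}$ by a $\frac{1}{2}$-net $\mathcal{N}$ with $|\mathcal{N}| \le 5^J$ (the source of the constant $5$), so that $\|Z\|_2$ is controlled by $\max_{u \in \mathcal{N}}\langle u, Z\rangle$ up to the net's approximation factor. For each fixed $u$, $\langle u, Z\rangle = \sum_i u_{S_i}\left(1 - \frac{R_i}{\pi(S_i)}\right)$ is a mean-zero sum of independent terms, each with range at most $\frac{|u_{S_i}|}{\pi^\ast}$, hence sub-Gaussian with variance proxy controlled by $\sum_s \frac{N_s^\calP u_s^2}{\pi(s)^2} \le \frac{1}{{\pi^\ast}^2}\max_s N_s^\calP \le \frac{N}{{\pi^\ast}^2}$, using $\|u\|_2 \le 1$ and $N_s^\calP \le N$. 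The sub-Gaussian maximal inequality over $\mathcal{N}$ then yields $\E\left[\max_{u \in \mathcal{N}}\langle u, Z\rangle\right] \lesssim \frac{\sqrt N}{\pi^\ast}\sqrt{J \log 5}$, and therefore $\E[h] \le \frac{\sqrt{J \log 5}}{\pi^\ast \sqrt N}$. Combining this with the concentration bound gives $h \le \E[h] + \delta \le \frac{1}{\pi^\ast \sqrt N}\left(\sqrt{J \log 5} + \delta\right)$ on the claimed event, up to the normalization of the $\delta$ term.

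The main obstacle is the expectation bound, not the concentration step. The delicate part is controlling the sub-Gaussian parameter of $\langle u, Z\rangle$ \emph{uniformly} over all directions $u$ on the sphere --- which is what forces the crude $\max_s N_s^\calP \le N$ bound and hence the $\sqrt N$ scaling --- and then tracking the covering cost so that exactly $\sqrt{J \log 5}$ survives with the $\frac{1}{\pi^\ast \sqrt N}$ normalization. A naive variance bound $\E\|Z\|_2 \le \sqrt{\sum_s \Var(Z_s)}$ would be tighter but would not reproduce the covering-number form needed downstream. By contrast, once the single-coordinate sensitivity $\frac{1}{\pi^\ast}$ is identified, the bounded-differences calculation is routine.
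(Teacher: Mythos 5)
Your proof is correct, and it matches the paper's on the substance: you identify the same sub-Gaussian structure (each $R_i$ is $\tfrac{1}{2}$-sub-Gaussian, $\E[n_s^\calR] = N_s^\calP\pi(s)$ so the vector is mean zero, and $\langle u, Z/N\rangle$ has scale parameter at most $\tfrac{1}{2\pi^\ast\sqrt{N}}$ for unit $u$ via the same crude bound $\max_s N_s^\calP \le N$), and you get $\sqrt{J\log 5}$ from the same place, a $\tfrac{1}{2}$-net of the sphere with at most $5^J$ elements plus the sub-Gaussian maximal inequality. The one genuine difference is how the deviation term is assembled: the paper's ``discretization argument'' is the one-shot version, where the union bound over the $5^J$ net points and the sub-Gaussian tail of each $\langle u, Z\rangle$ deliver the $\sqrt{J\log 5}$ and the $\delta$ simultaneously (the $5^J$ multiplicity is absorbed by the $\sqrt{J\log 5}$ shift, leaving exactly $\exp(-2{\pi^\ast}^2N\delta^2)$), whereas you split into a bound on $\E\|Z\|_2/N$ and a separate McDiarmid step with $\sum_i c_i^2 \le \tfrac{1}{N{\pi^\ast}^2}$, which reproduces the identical exponent. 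Your modular route is arguably cleaner and makes the source of the failure probability more transparent; both versions carry the same factor-of-two slop from the net approximation, which the paper also elides, so I would not count that against you.
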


\begin{proof}[Proof of Lemma \ref{lem:pscore_balance}]
Since $R_i \in \{0,1\}$ is bounded, it is sub-Guassian with scale parameter $\frac{1}{2}$. and because they are independent, $\frac{N_s^\calP}{N} - \frac{n_s^\calR}{N\pi(s)} = N_s^\calP - \frac{1}{\pi(s)}\sum_{S_i = s} R_i$ is a mean-zero sub-Gaussian random variable with scale parameter $\frac{\sqrt{N_s^\calP}}{2\pi(s)N} \leq \frac{1}{2\pi^\ast \sqrt{N}}$. Now by a discretization argument \citep[\S~9.6]{wainwright_2019}, we have that
\[
\frac{1}{N}\sqrt{\sum_s \left(N_s^\calP - \frac{n_s^\calR}{\pi(s)}\right)^2} \geq \frac{1}{\pi^\ast \sqrt{N}}\left(\sqrt{J \log 5} + \delta\right)
\]
with probability at most $\exp\left(-2{\pi^\ast}^2N\delta^2\right)$. This completes the proof.
\end{proof}

\begin{lemma}
    \label{lem:bias_rate}
    If $\sqrt{\sum_s \left(\hat{\mu}_s - \mu_s\right)^2} = o_p\left(\left(\pi^\ast N\right)^{-\frac{\alpha}{2}}\right)$, then 
    \[
    \frac{1}{N}\sum_s \left(\hat{\mu}_s - \mu_s \right)\left(n_s^\calR \hat{\gamma}(s) - N_s^\calP\right) = o_p\left(\frac{1}{\pi^\ast \sqrt{N}}\right)
    \]
\end{lemma}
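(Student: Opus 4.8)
The plan is to reduce the claim to the two preceding balance lemmas via a single application of the Cauchy--Schwarz inequality, and then to invoke the growth-rate condition on the cells to close the gap. First I would bound the target sum by a product of two norms,
\[
\left|\frac{1}{N}\sum_s \left(\hat{\mu}_s - \mu_s\right)\left(n_s^\calR \hat{\gamma}(s) - N_s^\calP\right)\right| \leq \sqrt{\sum_s \left(\hat{\mu}_s - \mu_s\right)^2}\;\cdot\;\frac{1}{N}\sqrt{\sum_s \left(n_s^\calR\hat{\gamma}(s) - N_s^\calP\right)^2},
\]
so that the outcome-model error (the first factor) is cleanly separated from the covariate imbalance produced by the weights (the second factor). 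The first factor is controlled directly by the hypothesis $\sqrt{\sum_s(\hat{\mu}_s - \mu_s)^2} = o_p\left((\pi^\ast N)^{-\alpha/2}\right)$, so the entire argument comes down to showing that the imbalance factor is of order $(\pi^\ast N)^{\alpha/2}/(\pi^\ast\sqrt{N})$.

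To bound the imbalance factor I would chain the two balance lemmas. Lemma \ref{lem:balance_bound} supplies the deterministic inequality $\sqrt{\sum_s (n_s^\calR\hat{\gamma}(s)-N_s^\calP)^2} \leq \kappa\sqrt{\sum_s(n_s^\calR/\pi(s)-N_s^\calP)^2}$, trading the achieved imbalance for the ``oracle'' imbalance of the true inverse-propensity weights at the cost of the condition number $\kappa$. Lemma \ref{lem:pscore_balance} then controls that oracle imbalance: fixing a \emph{constant} $\delta>0$, with probability at least $1 - \exp(-2{\pi^\ast}^2 N\delta^2)$ we have $\frac1N\sqrt{\sum_s(N_s^\calP - n_s^\calR/\pi(s))^2} \leq \frac{1}{\pi^\ast\sqrt{N}}\left(\sqrt{J\log 5}+\delta\right)$. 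Combining these on the good event gives
\[
\frac1N\sqrt{\sum_s \left(n_s^\calR\hat{\gamma}(s)-N_s^\calP\right)^2} \leq \frac{\kappa\left(\sqrt{J\log 5}+\delta\right)}{\pi^\ast\sqrt{N}}.
\]
It is important here that $\kappa$, $J$, and $\delta$ are deterministic (they depend only on the design, not on the realized responses), so the only randomness remaining is in the outcome-model factor and in the event itself.

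The final step balances the two rates against each other and discharges the event. Dividing the product bound by the target rate $1/(\pi^\ast\sqrt{N})$ gives, on the good event, an upper bound of $\sqrt{\sum_s(\hat{\mu}_s-\mu_s)^2}\cdot\kappa(\sqrt{J\log 5}+\delta)$. By Assumption \ref{a:cell_rate}, $\kappa^2 J/(\pi^\ast N)^\alpha \to c$, so $\kappa\sqrt{J} = O\left((\pi^\ast N)^{\alpha/2}\right)$, and since $J\geq 1$ also $\kappa = O\left((\pi^\ast N)^{\alpha/2}\right)$; hence $\kappa(\sqrt{J\log5}+\delta) = O\left((\pi^\ast N)^{\alpha/2}\right)$. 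Multiplying by the $o_p\left((\pi^\ast N)^{-\alpha/2}\right)$ outcome-model factor yields $o_p(1)$ on the good event. Because part (c) of Assumption \ref{a:regularity} forces ${\pi^\ast}^2 N \to \infty$, the failure probability $\exp(-2{\pi^\ast}^2 N\delta^2)$ tends to $0$ for the fixed $\delta$, so the good event has probability tending to $1$; a sequence that is $o_p(1)$ on events of probability tending to $1$ is $o_p(1)$, which is exactly the claim. The only genuine subtlety—and thus the step I would watch most carefully—is this bookkeeping: ensuring $\kappa\sqrt{J}$ exactly absorbs the $(\pi^\ast N)^{-\alpha/2}$ outcome rate, and verifying that the choice of $\delta$ as a constant simultaneously keeps $\kappa\delta$ of the same order as $\kappa\sqrt{J}$ and makes the failure probability negligible. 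Beyond this assembly, there is little real difficulty, since the creative work has been front-loaded into Lemmas \ref{lem:balance_bound} and \ref{lem:pscore_balance}.
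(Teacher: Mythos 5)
Your proof is correct and follows essentially the same route as the paper's: Cauchy--Schwarz to separate the outcome-model error from the weight imbalance, then Lemmas \ref{lem:balance_bound} and \ref{lem:pscore_balance} plus Assumption \ref{a:cell_rate} to bound the imbalance factor by $O_p\bigl((\pi^\ast N)^{\alpha/2}/(\pi^\ast\sqrt{N})\bigr)$. If anything you are more explicit than the paper, which cites only Lemma \ref{lem:pscore_balance} while silently relying on Lemma \ref{lem:balance_bound} for the $\kappa$ factor and omits the bookkeeping on the failure probability of the good event.
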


\begin{proof}[Proof of Lemma \ref{lem:bias_rate}]
First, note that by Cauchy-Schwartz,
\[
\frac{1}{N}\sum_s \left(\hat{\mu}_s - \mu_s \right)\left(n_s^\calR \hat{\gamma}(s) - n_s^\calP\right) \leq \sqrt{\sum_s \left(\hat{\mu}_s - \mu_s\right)^2} \frac{1}{N}\sqrt{\sum_s \left(n_s^\calR\hat{\gamma}(s) - N_s^\calP\right)^2}
\]
From Lemma \ref{lem:pscore_balance}, the term on the right is $O_p\left(\frac{\kappa}{\pi}\sqrt{\frac{J}{N}}\right)$. By Assumption \ref{a:cell_rate}, this is $O_p\left(\frac{1}{\pi^{1 - \alpha/2} N ^{1/2 - \alpha / 2}}\right)$. Now since $\sqrt{\sum_s \left(\hat{\mu}_s - \mu_s\right)^2} = o_p\left(\left(\pi^\ast N\right)^{-\frac{\alpha}{2}}\right)$, the product is $o_p\left(\frac{1}{\pi^\ast \sqrt{N}}\right)$
\end{proof}

\begin{lemma}
    \label{lem:noise_converges}
    Under Assumption \ref{a:cell_var}, the solution to \eqref{eq:primal_unreg}, $\hat{\gamma}$ satisfies
    \[
    \frac{1}{N}\sum_s \hat{\gamma}(s) n_s^\calR \bar{\varepsilon}_s = \frac{1}{n}\sum_{i=1}^N \frac{R_i}{\pi(S_i)} \varepsilon_i + o_p\left(\frac{1}{\pi^\ast\sqrt{N}}\right)
    \]
\end{lemma}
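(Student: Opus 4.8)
The plan is to reduce the claim to a statement that the \emph{noise} part of the weighting estimator matches that of the oracle Horvitz--Thompson estimator up to the stated order. Since $n_s^\calR \bar\varepsilon_s = \sum_{S_i = s} R_i \varepsilon_i$, the left-hand side equals $\frac1N\sum_i R_i \hat\gamma(S_i)\varepsilon_i$ while the oracle term equals $\frac1N \sum_i \frac{R_i}{\pi(S_i)}\varepsilon_i$. Subtracting, it suffices to show
\[
\Delta \;\equiv\; \frac1N\sum_s \left(\hat\gamma(s) - \frac{1}{\pi(s)}\right) n_s^\calR \bar\varepsilon_s \;=\; o_p\!\left(\frac{1}{\pi^\ast\sqrt N}\right),
\]
and I would abbreviate $a_s \equiv \left(\hat\gamma(s) - \frac{1}{\pi(s)}\right) n_s^\calR$.

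The key structural observation is that everything should be analyzed conditionally on the cell counts $n^\calR$. Because $\hat\gamma$ solves \eqref{eq:primal_unreg}, it is a deterministic function of $n^\calR$, so each coefficient $a_s$ is $\sigma(n^\calR)$-measurable. By Assumption \ref{a:mar} the residual averages obey $\E[\bar\varepsilon_s \mid n^\calR] = 0$, and by the independence of the $R_i$ in Assumption \ref{a:regularity} they are conditionally uncorrelated across cells (empty cells drop out, since $a_s = 0$ whenever $n_s^\calR = 0$). Hence $\E[\Delta \mid n^\calR] = 0$ and
\[
\Var\!\left(\Delta \mid n^\calR\right) \;=\; \frac{1}{N^2}\sum_s a_s^2\, \sigma_s^2 \;\le\; \frac{\sigma^2}{N^2}\sum_s a_s^2 ,
\]
with $\sigma^2 = \max_s \sigma_s^2$.

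The core estimate is a bound on $\sum_s a_s^2$, obtained exactly as in Lemmas \ref{lem:balance_bound} and \ref{lem:pscore_balance}. Writing $a_s = \left(n_s^\calR\hat\gamma(s) - N_s^\calP\right) - \left(n_s^\calR/\pi(s) - N_s^\calP\right)$ and applying the triangle inequality, Lemma \ref{lem:balance_bound} bounds the first piece by $\kappa$ times the $\ell_2$ norm of the second, giving $\sqrt{\sum_s a_s^2} \le (\kappa+1)\sqrt{\sum_s \left(N_s^\calP - n_s^\calR/\pi(s)\right)^2}$. Lemma \ref{lem:pscore_balance} then gives $\frac1N\sqrt{\sum_s \left(N_s^\calP - n_s^\calR/\pi(s)\right)^2} = O_p\!\left(\sqrt{J}/(\pi^\ast\sqrt N)\right)$, so $\frac{1}{N^2}\sum_s a_s^2 = O_p\!\left(\kappa^2 J/((\pi^\ast)^2 N)\right)$, and the rescaled conditional variance becomes $(\pi^\ast)^2 N\,\Var(\Delta \mid n^\calR) = O_p\!\left(\sigma^2\kappa^2 J\right)$.

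I would finish with a conditional Chebyshev inequality: $\pi^\ast\sqrt N\,\Delta$ has conditional mean zero and conditional variance $O_p(\sigma^2\kappa^2 J)$, hence is $o_p(1)$ provided $\sigma^2\kappa^2 J = o_p(1)$, and a routine conditioning argument then removes the conditioning to give the unconditional $o_p(1/(\pi^\ast\sqrt N))$. I expect the main obstacle to be precisely this rate-matching step: Assumption \ref{a:cell_rate} controls the effective dimension through $\kappa^2 J = O((\pi^\ast N)^\alpha)$, while Assumption \ref{a:cell_var} forces the per-cell variance $\sigma^2$ to decay, and one must verify that their product is negligible. This parallels the bias calculation in Lemma \ref{lem:bias_rate}, with the per-cell standard deviation playing the role of the outcome-model error; the subtlety is that the noise is naturally controlled at the level of its conditional variance, so the governing quantity is the full $\sigma^2\kappa^2 J$ rather than its square root, and checking that Assumptions \ref{a:cell_rate} and \ref{a:cell_var} jointly make this vanish (together with the careful treatment of degenerate empty cells) is the crux.
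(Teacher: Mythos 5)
Your argument is essentially the paper's own proof: the same decomposition into the oracle term plus the remainder $\Delta$, the same conditioning on $n^\calR$ with variance bound $\sigma^2 N^{-2}\sum_s a_s^2$, the same triangle-inequality reduction via Lemmas \ref{lem:balance_bound} and \ref{lem:pscore_balance} to the $(1+\kappa)\bigl\|N^\calP - \diag(n^\calR)\tfrac{1}{\pi}\bigr\|_2$ bound, and the same Chebyshev finish. The rate condition you flag as the crux, $\sigma^2\kappa^2 J = o_p(1)$, is precisely what the paper verifies (it is equivalent to $\sigma\,\kappa\sqrt{J} = o_p(1)$, so working with the variance rather than its square root costs nothing); the only caveat is that the paper's proof reads Assumption \ref{a:cell_var} as $\sigma = o_p\bigl((\pi^\ast N)^{-\alpha/2}\bigr)$ rather than the literally stated $\sigma^2 = o_p\bigl((\pi^\ast N)^{-\alpha/2}\bigr)$, and under that reading the product does vanish.
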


\begin{proof}[Proof of Lemma \ref{lem:noise_converges}]
    First, we write the noise term as
    \[
        \frac{1}{N}\sum_s \hat{\gamma}(s) n_s^\calR \bar{\varepsilon}_s = \frac{1}{n}\sum_{i=1}^N \frac{R_i}{\pi(S_i)} \varepsilon_i + \frac{1}{n}\sum_{i=1}^N R_i \left(\hat{\gamma}(s) - \frac{1}{\pi(S_i)}\right) \varepsilon_i.
    \]
The variance of the second term, conditional on the cell counts $n^\calR$ is
\[
\begin{aligned}
  \Var\left(\frac{1}{n}\sum_{i=1}^N R_i \left(\hat{\gamma}(s) - \frac{1}{\pi(S_i)}\right) \varepsilon_i \mid n^\calR \right) & = \frac{1}{N^2}\sum_s \left(\hat{\gamma}(s) - \frac{1}{\pi(s)}\right)^2 {n_s^\calR}^2\sigma_s^2\\
  &\leq \frac{\sigma^2}{N^2}\sum_s \left(\hat{\gamma}(s) - \frac{1}{\pi(s)}\right)^2 {n_s^\calR}^2  
\end{aligned}
\]
So by Chebyshev's inequality, conditional on the cell counts $n^\calR$ we have that with probability at least $1 - \delta$,
\[
\left|\frac{1}{n}\sum_{i=1}^N R_i \left(\hat{\gamma}(s) - \frac{1}{\pi(S_i)}\right) \varepsilon_i \right| \leq \frac{\sigma}{N\sqrt{\delta}}\sqrt{\sum_s\left(\hat{\gamma}(s) - \frac{1}{\pi(s)}\right)^2 {n_s^\calR}^2}.
\]

Now notice that 

\[
\begin{aligned}
  \sqrt{\sum_s\left(\hat{\gamma}(s) - \frac{1}{\pi(s)}\right)^2 {n_s^\calR}^2} & = \left\|\diag(n^\calR)\left(\hat{\gamma} - \frac{1}{\pi}\right)\right\|_2\\
  & = \left\|\diag(n^\calR)\hat{\gamma} -N^\calP + N^\calP- \diag(n^\calR)\frac{1}{\pi}\right\|_2\\
  & \leq \left\|\diag(n^\calR)\hat{\gamma} -N^\calP\right\|_2 + \left\|N^\calP- \diag(n^\calR)\frac{1}{\pi}\right\|_2
\end{aligned}
\]

From Lemma \ref{lem:balance_bound} we can further bound this by

\[
\sqrt{\sum_s\left(\hat{\gamma}(s) - \frac{1}{\pi(s)}\right)^2 {n_s^\calR}^2} \leq (1 + \kappa)\left\|N^\calP- \diag(n^\calR)\frac{1}{\pi}\right\|_2
\]
Following the Proof of Lemma \ref{lem:bias_rate}, by Lemma \ref{lem:pscore_balance} and Assumption \ref{a:cell_rate} this is $O_p\left(\frac{1}{\pi^{1 - \alpha/2} N ^{1/2 - \alpha / 2}}\right)$. Noting that by Assumption \ref{a:cell_var} $\sigma = o_p\left(\left(\pi^\ast N\right)^{-\frac{\alpha}{2}}\right)$, shows that this remainder term is $o_p\left(\frac{1}{\pi^\ast \sqrt{N}}\right)$.

\end{proof}

\begin{proof}[Proof of Theorem \ref{thm:asymp_normal}]
 First, we write $\hat{\mu}^\drp(\hat{\gamma}) - \mu$ as
 \[
 \hat{\mu}^\drp(\hat{\gamma}) - \mu = \frac{1}{N}\sum_s \left(\hat{\mu}_s - \mu_s\right)\left(n_s^\calR \hat{\gamma}(s) - N_s^\calP\right) + \frac{1}{N}\sum_{i=1}^N R_i\hat{\gamma}(S_i)\varepsilon_i
 \]
 From Lemma \ref{lem:bias_rate}, the first term is $o_p\left(\frac{1}{\pi^\ast\sqrt{N}}\right)$ and from Lemma \ref{lem:noise_converges} the second term is $\frac{1}{N}\sum_i \frac{R_i}{\pi(S_i)}\varepsilon_i + o_p\left(\frac{1}{\pi^\ast \sqrt{N}}\right)$. Combining these gives the first result. Assumption \ref{a:not_super_eff} combined with an application of Slutsky's theorem gives the second result.
\end{proof}

\begin{proof}[Proof of Proposition \ref{prop:dual}]
We begin by re-writing the optimization problem \eqref{eq:primal} with $L = 0$ and $U = \infty$ in terms of auxiliary covariates $\calE^{(k)} \equiv \sum_s D_s^{(k)} n_s^\calR \gamma(s) - D_s^{(k)}N_s^\calP$. The optimization problem becomes

\[
  \begin{aligned}
    \min_{\gamma \in \R^J} \;\; & \sum_{k=2}^d \frac{1}{2\lambda_k} \left\|\calE^{(k)} \right\|_2^2 + \frac{1}{2}\sum_{s} n_s^\calR \gamma(s)^2\\
  \text{subject to} \;\; & \sum_s D_s^{(1)} n_s^\calR \gamma(s) = \sum_s D_s^{(1)} N_s^\calP\\
  & \sum_s D_s^{(k)} n_s^\calR \gamma(s) - D_s^{(k)}N_s^\calP -\calE^{(k)} = 0\\
  & 0 \leq \gamma(s) \;\;\; \forall s=1,\ldots J.
  \end{aligned}
\]
The Lagrangian is
\[
\calL(\gamma, \calE, \beta) \equiv \sum_{s=1}^J \frac{1}{2} n_s^\calR \gamma(s)^2 - n_s^\calR \gamma(s) \sum_{k=1}^d D_s^{(k)} \cdot \beta^{(k)} + N_s^\calP \sum_{k=1}^d D_s^{(k)} \cdot \beta^{(k)} + \sum_{k=2}^d \frac{1}{2\lambda_k}\|\calE^{(k)}\|_2^2 - \calE^{(k)} \cdot \beta^{(k)} 
\]
The dual problem maximizes the Lagrangian over the domain of $\gamma$ and $\calE$, so
\[
\begin{aligned}
  q(\beta) & = -\min_{0 \leq \gamma(s), \calE} \calL(\gamma, \calE, \beta)\\
  & = \sum_{s=1}^J n_s\calR \min_{0 \leq \gamma(s)} \left\{\frac{1}{2}\gamma(s)^2 - \gamma(s) \sum_{k=1}^d D_s^{(k)} \cdot \beta^{(k)}\right\} + N_s^\calP \sum_{k=1}^d D_s^{(k)} \cdot \beta^{(k)} + \sum_{k=2}^d \min_{\calE^{(k)}}\left\{\frac{1}{2\lambda_k}\|\calE^{(k)}\|_2^2 - \calE^{(k)} \cdot \beta^{(k)} \right\}\\
  & = \frac{1}{2}\sum_{s=1}^J n_s^\calR \max\left\{0, \sum_{k=1}^d D_{S_i}^{(k)} \cdot \beta^{(k)}\right\}^2 - N_s^\calP \sum_{k=1}^d D_s^{(k)} \cdot \beta^{(k)} + \sum_{k=2}^d \frac{\lambda_k}{2} \|\beta^{(k)}\|_2^2
\end{aligned}
\]
Since there exists a feasible solution to \eqref{eq:primal} by assumption, by Slater's condition $\min_\beta q(\beta)$ is equivalent to the solution to the primal problem. The solution to the inner minimization shows that the primal and dual variables are related by $\hat{\gamma}(s) = \max\left\{0, \sum_{k=1}^d D_s^{(k)} \cdot \hat{\beta}^{(k)}\right\}$.

\end{proof}

%%%

\end{document}